\let\OLDthebibliography\thebibliography
\renewcommand\thebibliography[1]{
  \OLDthebibliography{#1}
  \setlength{\parskip}{0pt}
  \setlength{\itemsep}{0.2pt plus 0.3ex}
}
\newtheorem{theo}{Theorem}
\newtheorem{lemma}[theo]{Lemma}
\newtheorem{proposition}[theo]{Proposition}
\numberwithin{theo}{section}
\numberwithin{equation}{section}
\newtheorem{remark}[theo]{Remark}
\newtheorem*{assumption}{Assumption}
\theoremstyle{definition}
\newtheorem{definition}[theo]{Definition}
\newcommand{\bfu}{{\mathbf u}}
\newcommand{\Aa}{{\mathcal A}}
\newcommand{\Bb}{{\mathcal B}}
\newcommand{\Ee}{{\mathcal E}}
\newcommand{\Nn}{{\mathcal N}}
\newcommand{\Oo}{{\mathcal O}}
\newcommand{\Pp}{{\mathcal P}}
\newcommand{\Qq}{{\mathcal Q}}
\newcommand{\Ss}{{\mathcal S}}
\newcommand{\Tt}{{\mathcal T}}
\newcommand{\Vv}{{\mathcal V}}
\newcommand{\Ww}{{\mathcal W}}
\newcommand{\Xx}{{\mathcal X}}
\newcommand{\Yy}{{\mathcal Y}}
\newcommand{\Zz}{{\mathcal Z}}
\newcommand{\Ab}{{\mathbb A}}
\newcommand{\CC}{{\mathbb C}}
\newcommand{\EE}{{\mathbb E}}
\newcommand{\GG}{{\mathbb G}}
\newcommand{\II}{{\mathbb I}}
\newcommand{\KK}{{\mathbb K}}
\newcommand{\NN}{{\mathbb N}}
\newcommand{\PP}{{\mathbb P}}
\newcommand{\RR}{{\mathbb R}}
\newcommand{\VV}{{\mathbb V}}
\newcommand{\XX}{{\mathbb X}}
\newcommand{\ZZ}{{\mathbb Z}}
\newcommand{\nul}{{\bf 0}}
\newcommand{\qtx}[1]{\quad\text{#1}\quad}
\newcommand{\Sym}{{\rm Sym}}
\newcommand{\pmat}[1]{\begin{pmatrix} #1  \end{pmatrix}}
\newcommand{\smat}[1]{\left( \begin{smallmatrix} #1  \end{smallmatrix} \right)}
\newcommand{\diag}{{\rm diag}}
\DeclareMathOperator{\im}{{\rm Im}}
\DeclareMathOperator{\ran}{{\rm ran}}
\DeclareMathOperator{\Tr}{{\rm Tr}}
\DeclareMathOperator{\spec}{{\rm spec}}
\title[GOE statistics for Anderson models]{GOE statistics for Anderson models on antitrees and thin boxes in $\ZZ^3$ with deformed Laplacian}
\author{Christian Sadel}
\thanks{This research was funded by the Chilean FONDECYT 1161651 and by the Iniciativa Científica Milenio through the Núcleo Mileneo NC120062 on 'Stochastic Models of Complex and DIsordered Systems'.}
\address{Facultad de Matem\'aitcas, Pontificia Universidad Cat\'olica de Chile} 
\email{chsadel@mat.uc.cl}
\subjclass[2010]{Primary 60B20, Secondary 82B44, 60H25, 15B52}  
\keywords{Anderson model, universality, local statistics}
\begin{document}

\begin{abstract}
 Sequences of certain finite graphs - special types of antitrees - are constructed along which the Anderson model shows GOE statistics, i.e. a re-scaled eigenvalue process converges to the ${\rm Sine}_1$ process.
The Anderson model on the graph is a random matrix being the sum of the adjacency matrix and a random diagonal matrix with independent identically distributed entries along the diagonal.  The strength of the randomness stays fixed, there is no re-scaling with matrix size.
These considered random matrices giving GOE statistics can also be viewed as random 
Schr\"odinger operators $\Pp\Delta+\Vv$ on thin finite boxes in $\ZZ^3$ where the Laplacian $\Delta$ is deformed by a projection $\Pp$ commuting with $\Delta$.
\end{abstract}

\maketitle

\tableofcontents

\section{Introduction}

In the theory of randomly disordered systems there are two very important fields of research, the Anderson model introduced in \cite{And} and random matrix ensembles such as the Gaussian Orthogonal Ensemble (GOE) introduced by Wigner \cite{Wi}.
The latter one model the observed repulsion between energy levels (eigenvalues) in large nuclei. This is characterized by the local eigenvalue statistics which for the GOE is given by the ${\rm Sine}_1$ process in the limit where the matrix size goes to infinity, see e.g. \cite{Me}. This type of limiting statistics is expected for a wide range of disordered systems of the same symmetry class which is referred to as a universal behavior or simply {\it universality}. The GOE statistics applies to models with time reversal symmetry in delocalized regimes.
Without time reversal symmetry (for instance in presence of magnetic phases) disordered systems are expected to follow the local statistics of the Gaussian Unitary Ensemble (GUE) given by the ${\rm Sine}_2$ process. 

Let us give some small introduction to Random Matrix Theory. For more detailed information we refer to the common literature, e.g. \cite{Me, AGZ,  For}.
The Gaussian Orthogonal Ensemble ${\rm GOE}=({\rm GOE}(N))_{N\in \NN}$ is given by the collection of Gaussian distributions ${\rm GOE}(N)$ with density proportional to 
$e^{-\frac N4 \Tr(H^2) } $ on the set of {\it real symmetric} 
$N \times N$ matrices $\Sym(N)$ for $N\in \NN$. The distribution ${\rm GOE}(N)$ is invariant under orthogonal conjugation, $O^\top {\rm GOE}(N) O\stackrel{d}{=}{\rm GOE}(N)$ for $
O\in{\rm O}(N)=\{O\in {\rm Mat}(N,\RR)\,:\,O^\top O=\II_N\}$, where
$\II_N$ denotes the $N\times N$ unit matrix.
If the random matrix $H_N$ is drawn from GOE$(N)$ this means that it has {\it real} entries,  $(H_N)_{j,k}$ for $j\geq k$ (entries above and on the diagonal) are independent Gaussian random variables with mean zero and variance $1/N$ 
for the off-diagonal variables and variance $2/N$ for the diagonal. The other entries are determined as $H_N$ is symmetric.
Similar, the Gaussian Unitary Ensemble is defined in a way that the distribution is Gaussian and invariant under unitary conjugation.

The normalization in $N$ assures that in the limit $N\to \infty$ the (random) density of states measures $\nu_N=\frac{1}{N} \sum_{i=1}^N \delta_{\lambda^N_i}$ converge to the Wigner semi-circle measure $\frac{1}{2\pi}\sqrt{4-\lambda^2}\,d\lambda$ supported on $[-2,2]$.
Here, $\{\lambda^N_i\,:\,i=1,\ldots,N\}$ is the (random) set of eigenvalues of the random matrix $H_N$ and $\delta_{\lambda} $ denotes the delta measure supported at $\lambda$.
The distribution of these eigenvalues for finite $N$ follow the $\beta$-ensemble rule, that is, the symmetrized distribution of $(\lambda^N_1,\ldots,\lambda^N_N) \in \RR^N$ is proportional to
$e^{-\beta N/4 \,\sum_i \lambda_i^2} \prod_{i<j} |\lambda_i-\lambda_j|^\beta \prod_i d\lambda_i$ where $\beta=1$ for GOE and $\beta=2$ for GUE. 

The ${\rm Sine}_\beta$ processes emerge as limits of the local statistics of these joint distributions.
For some value in the so called bulk spectrum $\lambda_0 \in (-2,2)$ one can consider the eigenvalue process around $\lambda_0$, that is the shifted eigenvalue process $\spec(H_N-\lambda_0\II_N)=\{\lambda^N_i-\lambda_0\,:\,i=1,\ldots,N\}$.
According to the semi-circle law, the number of eigenvalues in a fixed small neighborhood around $\lambda_0$ is proportional to $N\sqrt{4-\lambda_0^2}$
and the distance of $\lambda_0$ to the next eigenvalue is roughly proportional to $1/(N\sqrt{4-\lambda_0^2})$. So in order to get a limiting point process it is reasonable to consider $\Sigma_N\,:=\,N \sqrt{4-\lambda_0^2}/(2\pi)\,\spec(H_N-\lambda_0\II)$. We view this random discrete set as a random counting measure $\sigma_N=\sum_{x\in \Sigma_N} \delta_x$, that is $\sigma_N(A)=|\Sigma_N\cap A|$ for $A\subset \RR$. Thus, we have a probability distribution $\Nn_N$ on the set of discrete counting measures on $\RR$. In general such a distribution is called a point process. 
$\Nn_N$ converges weakly, $\Nn_N \Rightarrow {\rm Sine}_1$ (or $\Sigma_N\Rightarrow {\rm Sine}_1$) for $N\to \infty$.
Weak convergence of point processes is given by convergence of the Laplace functional $\Psi_{\Nn_N}(f) \to \Psi_\Nn(f)$ for non-negative continuous functions $f$ with compact support, where
$$
\Psi_\Nn(f)\,:=\,\EE_\Nn(\exp(-\sigma(f)))\,=\,\int d\Nn(\sigma)(\exp(-\sigma(f))\,.
$$
In essence, this functional replaces the role of the characteristic function (Fourier transform) for probability distributions on $\RR$.
Now, in the considered case, the processes $\Nn_N$ and ${\rm Sine}_\beta$ are also uniquely characterized by their moment measures (or joint intensities) which all exist and one obtains vague-convergence of those. Under certain growth conditions 
vague convergence of finite moment measures is sufficient for obtaining weak convergence\footnote{This is similar as the comparison of weak convergence vs. convergence of moments for probability distributions on $\RR$.} which is the case here.
For a point process $\Nn$ the finite moment measures are given by the expectations over the power measures, this means for  a bounded Borel set $A_1\times A_2\times\cdots \times A_k \subset \RR^k$ the $k$-th finite moment measure is given by
$$
\EE_\Nn \left(\sigma^k(\prod_{i=1}^k A_i)\,\right)\,=\, \int d\Nn(\sigma) \prod_{i=1}^k \sigma(A_i)\;.
$$
${\rm Sine}_1$ is a so called Pfaffian process where all the finite moment measures are absolutely continuous and their densities are given by a certain Pfaffian.
If $H_N$ where drawn from the GUE we would have convergence to the ${\rm Sine}_2$ process 
which is a determinantal process given by the famous Sine-kernel, 
$K(x,y)=\frac{\sin(\pi(x-y))}{\pi(x-y)}$, giving the name Sine processes. 
This means, the finite moment measures are given by
$$
\int d{\rm Sine}_2(\sigma) (\sigma^k(f))\,=\,\int f(x_1,\ldots,x_k)\,\det[K(x_i,x_j)]_{1\leq i \leq k, 1 \leq j \leq k}\;dx_1,dx_2\cdots dx_k\;.
$$

Universality (limiting ${\rm Sine}_\beta$ process) has been proved for many random matrix ensembles, e.g. \cite{DG, ESY, TV} and particularly very recent works by Ajanki, Erd\H{o}s, Kr\"uger and Schr\"oder allow very general profiles of covariance structures and dependencies with slow correlation decay of the random entries \cite{AEK, EKS}. However, these ensembles are still  far from ensembles of very sparse matrices (many zero entries) or matrices with randomness only along the diagonal, both of which apply for Anderson models. \\

The Anderson model is supposed to describe the quantum motion of electrons in randomly disordered solids like doped semi-conductors. Unlike the random matrix ensembles here one considers operators on an infinite dimensional separable Hilbert space. Typically it is given by $\ell^2(\ZZ^d)$ or $\ell^2(\GG)$ for some countable graph $\GG$ and the random entries just appear on the diagonal. This means that one considers a random operator $H=\Delta+\Vv$ given by the sum of a real random diagonal  multiplication operator $\Vv$ (in the canonical basis) with independent identically distributed entries along the diagonal and a graph-Laplacian or adjacency operator $\Delta$. The physically most relevant models are given by the sum of a random potential and the discrete Laplacian on $\ZZ^d$, $d=1,2,3$.  There are also continuous analogues defining Anderson models on $L^2(\RR^d)$ where $\Delta$ is the actual Laplacian on $\RR^d$ and $\Vv$ a random multiplication operator made out of a sum of bump like potentials centered around lattice points and multiplied by i.i.d. real random variables.

For Anderson models one can also consider eigenvalue statistics if one restricts the model to sequences of finite cubic boxes in $\ZZ^d$ or adequate finite sub-graphs of $\GG$ approaching the infinite graph. Restricting the Anderson model to a finite box gives a random matrix. However, the sequences of such random matrices are very different from the random matrix ensembles mentioned above. The random entries are only on the diagonal and the variances are constant and not re-scaled with the size of the matrix. The off-diagonal entries are typically very sparse\footnote{meaning most off diagonal entries are zero, the non-zero entries are sparse} describing the graph structure (edges, edge-weights).

For one and quasi-one dimensional models, e.g. \cite{GMP,KuS, KLS} and for large disorder or at band edges in any dimension, e.g. \cite{FS,AM,DLS,Klo} the Anderson model localizes. This means one has pure point spectrum and exponentially localized eigenfunctions, a phenomenon called Anderson localization. 
In regimes of Anderson localization one finds Poisson type statistics (i.e. limiting Poisson point processes) \cite{Mi, Wa, GK}.

While there is a huge literature on Anderson localization, there is still a major open problem concerning delocalization.
For Anderson models of low disorder on $\ZZ^d$ for dimension $d\geq3$ it is expected that some absolutely continuous spectrum persists\footnote{The non-disordered Laplacian has purely absolutely continuous spectrum}. Moreover, in these delocalized regimes one also expects some form of universality (GOE statistics) for the eigenvalue statistics along increasing boxes approaching $\ZZ^d$.
However, so far even the existence of this delocalized regime for models on $\ZZ^d$ is mathematically unproven.

Existence of a delocalized regime for Anderson models was first shown for infinite dimensional regular trees (Bethe lattice) and then extended to similar tree like structures and tree-strips
\cite{Kl, ASW, FHS, KLW, AW, KS, Sa-TS, Sa-Fib}. Only recently some examples of graphs with finite $d$-dimensional growth rate ($d>2$) have been introduced with rigorous proofs of absolutely continuous spectrum for Anderson models on them. These are so called antitrees and similar graph structures \cite{Sa-AT, Sa-OC}. 
The word antitree describes that these graphs are far from trees as they have some local complete-graph-like structures in them.  These can be viewed as local mean-field structures which give a local averaging effect on the random potentials preventing localization.

A connection from Anderson models to GOE statistics has been found by considering long strips within $\ZZ^2$ and re-scaling the random potential in relation to the graph size. Originally one also had to modify the Laplacian slightly \cite{VV} which was later resolved in \cite{SV}.
 In this paper we combine methods from \cite{Sa-AT} with \cite{VV,SV} to construct examples of sequences of Anderson models on finite graphs with {\rm fixed disorder strength} that show GOE statistic in the limit. An additional re-scaling of the randomness (in relation to the non-random parts) as in \cite{SV,VV} is not needed.
 The graphs are tensor products of two-dimensional grids and a complete graph with normalized edge weights. As tensor products of such a complete graph with the line $\ZZ$ or half line $\ZZ_+$ are special cases of antitrees as described in \cite{Sa-AT}, we call these graphs antitrees as well.
The locally averaging graph structure of the complete graph part replaces the re-scaling of the randomness in \cite{VV,SV}. 
In some sense this sequence of considered models lies in between the theory of random band matrix ensembles and the Anderson models on $\ZZ^d$.

\subsection{The considered graphs and related random matrices}

Let us introduce more precisely the graph structures we will consider.
\begin{definition}
a) A discrete weighted graph $(\GG,W)$ is a countable 
or finite set $\GG$ together with a symmetric, real valued weight function $W:\GG\times \GG \to \RR$.
 Two distinct points $x\neq y \in \GG$ are considered to be connected by an edge if and only if $W(x,y)\neq 0$ in which case $W(x,y)=W(y,x)\in\RR$ is the edge weight.
 The diagonal elements $W(x,x)$ will be referred to as point weights. One may think of $W$ as a real symmetric matrix indexed by points in $\GG$.
 This is the adjacency matrix of the weighted graph.\\
b) The complete graph of $s$-elements with re-normalized edge weights $(\KK_s,P_s)$ is given by
$\KK_s\,:=\,\{1,\ldots,s\}$, $P_s(j,k)=\frac1s$ for any $j,k\in \KK_s$, thus any point is connected to any other point and the weights are normalized by $\frac{1}{s}$. With this normalization, $P_s$ can be viewed as a rank one orthogonal projection and thus $\|P_s\|=1$ independent of $s$.\\
c) If $(\GG,W)$ is a discrete weighted graph, then the $\GG$-antitree of constant width $s$ is given by the tensor product 
$(\GG,W)\otimes(\KK_s,P_s)=(\GG\times \KK_s, W\otimes P_s)$ where
$W\otimes P_s((x,j),(y,k))=W(x,y) P_s(j,k)=\frac{1}{s} W(x,y)$.
\end{definition}
In part c), the $\GG$-antitree of constant width is basically obtained by replacing any vertex $x\in \GG$ by a set 
$S_x$ of $s$ vertices and the edges between $x$ and $y$ 
by $s^2$ edges connecting all points in $S_x$ and $S_y$. Doing this procedure where $|S_x|$ is not constant we would get a general $\GG$-antitree (of non-constant width $|S_x|$).
The antitrees we worked with in \cite{Sa-AT} would all be $\ZZ_+$-antitrees in this sense where $\ZZ_+$ is the half line of positive integers with edges only between neighbors. Therefore, we also use the term antitree here as well.

We will consider such antitrees of constant width (tensor products with $\KK_s$) for (long) two-dimensional strips.
Such adjacent matrices can also be obtained through some deformation of the Laplacian of 3-dimensional (thin) boxes as we shall see.
More precisely, the $n\times r$ strip $\ZZ_{n\times r}$ with point weight $w$ is the set $\{1,\ldots,n\}\times\{1,\ldots,r\}\subset \ZZ^2$ with weight function
$$
W_w(x,y)=\begin{cases}
        0 & \;\text{if}\; \|x-y\|_1>1 \\
        1 & \,\text{if}\,\|x-y\|_1=1\\
        w & \,\text{if}\;x=y
       \end{cases}
$$
The corresponding $\ZZ_{n\times r}$ antitree of constant width $s$ shall be denoted by $\Ab^w_{n\times r,s}$ and the corresponding $nrs \times nrs$
adjacency operator by $\Aa^w_{n\times r,s}$, i.e.
$(\Ab^w_{n\times r,s},\Aa^w_{n\times r,s})=(\ZZ_{n\times r},W_w)\otimes (\KK_s,P_s)$.
To represent it in matrix form, we will split the $nrs \times nrs$ matrix $\Aa^w_{n\times r,s}$ into $rs \times rs$ blocks and each of these blocks is split into $s\times s$ blocks.
Identifying the base space with
$$
\ZZ_{n\times r\times s}=\{(x_1,x_2,x_3)\in\ZZ^3,\; 1\leq x_1\leq n,\, 1\leq x_2\leq r,\,1\leq x_3\leq s\}\;,$$ 
$\Aa^w_{n\times r,s}$ can be considered as an operator on $\ell^2(\ZZ_{ n \times r\times s})$ and we use
the canonical orthonormal basis $(\delta_{x_1,x_2,x_3})$ with lexicographical order to represent  $\Aa^w_{n\times r,s}$ as a matrix. We start with the mean-field vector
\begin{equation}
1_s\,:=\,\frac1{\sqrt{s}}\, \pmat{1\\ \vdots \\ 1}\,\in\,\RR^s \qtx{and note that} P_s=1_s1_s^\top = \frac1s \pmat{1 & \cdots & 1 \\ \vdots & & \vdots \\ 1 & \cdots & 1}\,\in\,\RR^{s\times s}\;.
\end{equation}
Then, define the $rs \times rs$ matrices with $s\times s$ blocks
\begin{equation}\label{eq-def-A}
P_{r,s}:=\II_r \otimes P_s=\pmat{P_s \\ & \ddots \\ & & P_s} \qtx{and}
A^w_{r,s}:=\pmat{wP_s & \;P_s\; & \\ \; P_s\; & \ddots & \ddots\\ & \ddots & \ddots & \;P_s\;\\
& & \;P_s\; & wP_s}\;.
\end{equation} 
Here, $\II_r$ is the $r\times r$ identity matrix and $A^w_{r,s}$ is block-tri-diagonal with $s\times s$ blocks. Finally, we have 
$\Aa^w_{n\times r,s}$ as a block-tri-diagonal $nrs\times nrs$ matrix structured in $rs\times rs$ blocks:
\begin{equation}
\Aa^w_{n\times r,s}\,=\,\pmat{A^{w}_{r,s} & P_{r,s} \\ P_{r,s} & A^w_{r,s} & \ddots \\  & \ddots & \ddots & P_{r,s} \\ & & P_{r,s} & A^w_{r,s}}
\end{equation}

Let be given a probability distribution $\nu$ on $\RR$.
The {\it Anderson type model} on $\Aa^w_{m,\times r}$ with single site  distribution $\nu$ is given by the random real symmetric matrix
\begin{equation}\label{eq-def-H}
H^w_{n,r,s}:=\Aa^w_{n\times r,s}\,+\,\Vv_{nrs}
\end{equation}
where $\Vv_{nrs}$ is a $nrs\times nrs$ real diagonal matrix with independent identically $\nu$-distributed random variables along the diagonal. We will assume that the distribution is compactly supported.
This is a family of random band-matrices with randomness only on the diagonal, size $N=nrs$, band-width $2rs$ and sparse structure in the entries, but with some local mean-field setup within groups of $s\times s$ blocks.

\subsection{Main results}

\begin{assumption}
{\rm (A1)}
We assume that the distribution $\nu$ of the single site potential (diagonal entries of $\Vv_{nrs}$) is compactly supported, say in the interval $[-\sigma, \sigma]$.\\
{\rm (A2)} Furthermore let us assume that the distribution is centered, $\EE(v):=\int v\,d\nu(v)=0$, and non-trivial
$\EE(v^2)=\int v^2\,d\nu(v)>0$.
\end{assumption}
 
We will often need averaged quantities over the distribution $\nu$ or products thereof $\nu^{\otimes k}$ 
as the diagonal entries of $\Vv_{nrs}$ are all independently $\nu$ distributed.
When the random variables and their dependence on these entries are clear, we will denote the expectations values by $\EE$. Furthermore, in these expressions a variable $v$ will express a $\nu$-distributed independent random variable.
 
Now let us introduce the harmonic mean
\begin{equation}
h_\lambda\,:=\, \left(\EE\left(\frac1{\lambda-v}\right)\,\right)^{-1}\,=\,
\left(\int \frac1{\lambda-v} \,d\nu(v)\,\right)^{-1}
\end{equation}
and define the interval
\begin{equation}
I_{w,\nu}\,:=\,\{\lambda\in \RR\,:\, |\lambda|>\sigma\qtx{and} |h_\lambda-w|<4\}\,.
\end{equation}
The harmonic-mean to arithmetic-mean inequality gives $|h_\lambda| < |\lambda|$ for $|\lambda|>\sigma$
and we find
\begin{equation}
[-4+w,4+w]\,\setminus\,[-\sigma,\sigma]\;\subset\;I_{w,\nu}\;.
\end{equation}
So for small $\sigma$ or large $|w|$ the set $I_{w,\nu}$ will not be empty.

\begin{theo} \label{th-main}
Let $H^w_{n,r,s}$ be the Anderson models on the antitree $\Ab^w_{n\times r,s}$ with single site distribution $\nu$ under the assumptions {\rm (A1)} and {\rm (A2)}.
 For almost any $\lambda\in I_{w,\nu}$ there exist sequences $ s_k \gg n_k \gg r_k \to \infty$, and normalization constants $\Nn_k$ such that 
 $$
\Nn_k \spec(H^w_{n_k,r_k,s_k}-\lambda)\,\Rightarrow\;\;{\rm Sine}_1 \quad\text{for $k\to\infty$}
 $$
 The growth of $s_k/n_k \to \infty$ and $r_k\to\infty$ can be chosen as slow as one wants,
meaning that for any increasing function $f(n)$ growing towards infinity one finds sequences $s_k,n_k,r_k$ satisfying this limit with $s_k/n_k < f(n_k)$ and $r_k<f(n_k)$.
\end{theo}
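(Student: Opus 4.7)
The plan is to reduce the random matrix $H^w_{n,r,s}$ to an effective random Schr\"odinger operator on the two-dimensional strip $\ZZ_{n\times r}$ by integrating out the perpendicular (``non-mean-field'') directions in each fiber $\RR^s$, and then to apply the strip Sine$_1$ results of Valk\'o--Vir\'ag \cite{VV} and Sadel--Vir\'ag \cite{SV}. First, write $\ell^2(\ZZ_{n\times r\times s}) = \ell^2(\ZZ_{n\times r}) \otimes \RR^s$ and introduce the mean-field projector $\bP = \II \otimes P_s$. Because $\Aa^w_{n\times r,s} = W_w \otimes P_s$ has range inside $\bP$, the coupling between $\bP$ and $\bP^\perp = \II - \bP$ comes entirely through $\Vv_{nrs}$. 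A fibrewise Feshbach/Schur argument at spectral parameter $\mu$ (admissible for $|\mu|>\sigma$, hence in a neighborhood of any $\lambda\in I_{w,\nu}$) reduces the eigenvalue equation $H^w_{n,r,s}\psi=\mu\psi$ to the effective equation
$$
W_w\,\phi \;=\; \wt h(\mu)\,\phi \qquad \text{on } \ell^2(\ZZ_{n\times r}),
$$
where $\wt h(\mu)$ is diagonal with entries $\wt h_x(\mu):=\bigl(\tfrac 1s\sum_{j=1}^s (\mu - v(x,j))^{-1}\bigr)^{-1}$. The $\bP^\perp$-channel carries spectrum inside $[-\sigma,\sigma]$ and contributes no eigenvalues near $\lambda$.

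Next, the i.i.d.\ structure under (A1)--(A2) gives, by the law of large numbers, $\wt h_x(\mu)\to h_\mu$ a.s., and the CLT combined with the delta method yields
$$
\wt h_x(\mu) \;=\; h_\mu \;-\; \tfrac{1}{\sqrt s}\,\eta_x(\mu) \;+\; O\bigl(1/s\bigr),
$$
with $\eta_x(\mu)$ i.i.d.\ in $x$, asymptotically centered Gaussian of variance $\tau^2_\mu := h_\mu^{4}\,\Var_\nu\bigl((\mu-v)^{-1}\bigr)$. Linearizing in $\mu$ around $\lambda$ via the smoothness of $\mu\mapsto h_\mu$ on $I_{w,\nu}$, the eigenvalues $\mu$ of $H^w_{n,r,s}$ near $\lambda$ are in bijection (up to errors of order $o(1/(nr))$) with eigenvalues $\xi$ near $h_\lambda$ of the random strip operator
$$
\Hh^{(s)}_{n,r} \;:=\; W_w \;+\; \tfrac 1{\sqrt s}\,\eta(\lambda)
$$
on $\ell^2(\ZZ_{n\times r})$, via the affine map $\mu = \lambda + (\xi - h_\lambda)/h'_\lambda$.

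The hypothesis $|h_\lambda - w|<4$ places $h_\lambda$ in the bulk spectrum of $W_w$, and with $s_k\gg n_k$ the effective disorder satisfies $n_k\tau^2_\lambda/s_k \to 0$, exactly the regime in which \cite{VV,SV} yield convergence of the rescaled bulk eigenvalue point process of $\Hh^{(s_k)}_{n_k,r_k}$ to ${\rm Sine}_1$ as $r_k\to\infty$ (with $r_k$ allowed to grow arbitrarily slowly). Applying the affine change of variable $\xi\mapsto\mu$ of the previous step then produces the Sine$_1$ limit for $\Nn_k\,\spec(H^w_{n_k,r_k,s_k}-\lambda)$, where the normalization $\Nn_k$ absorbs the strip density of states $\rho_{n_k,r_k}(h_\lambda)$ and the Jacobian $|h'_\lambda|^{-1}$; the ``almost every $\lambda$'' qualifier is the full-measure set on which these densities exist and are positive.

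The main obstacle is this final step. The effective noise $\eta_x(\lambda)$ is not Gaussian but only asymptotically Gaussian, and it depends weakly on the spectral parameter through the Schur complement. Adapting the VV/SV transfer-matrix / stochastic differential equation machinery therefore requires a quantitative Gaussian approximation for $\eta(\lambda)$ that is effective on the spacing scale $1/(n_k r_k)$, together with uniform control on $\sup_{|\mu-\lambda|\le\delta_k}\|\wt h(\mu)-\wt h(\lambda)\|$ on a mesoscopic window $\delta_k$ much larger than the local eigenvalue spacing but much smaller than $\mathrm{dist}(\lambda,[-\sigma,\sigma])$. This is precisely where the freedom to make the hierarchy $s_k\gg n_k\gg r_k$ as lopsided as desired enters, and it explains why the growth rates in the theorem can be taken arbitrarily slow.
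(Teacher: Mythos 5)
You have correctly identified the skeleton of the argument: the Schur complement against the mean-field directions converts $H^w_{n,r,s}$ into a nonlinear eigenvalue problem on the strip $\ZZ_{n\times r}$ whose effective potential is the harmonic mean $\wt h_x(\mu)=(\tfrac1s\sum_j(\mu-v(x,j))^{-1})^{-1}$, the fluctuations of that harmonic mean around $h_\mu$ are of order $s^{-1/2}$ with asymptotic variance $h_\mu^4\Var_\nu((\mu-v)^{-1})$, and the disorder strength $s^{-1/2}$ is the quantity that plays the role of $\lambda$ in \cite{VV,SV}. This is exactly what the paper does in Sections 2--4. However, the step you yourself flag as ``the main obstacle'' is where essentially all of the content lies, and the proposal does not close it; a few of the difficulties are more than quantitative bookkeeping.

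First, the effective operator $\Hh^{(s)}_{n,r}=W_w+\tfrac1{\sqrt s}\eta(\lambda)$ is not what the reduction actually produces. The Schur complement yields a $\mu$-dependent potential $\wt h(\mu)$, so the eigenvalue problem is genuinely nonlinear in the spectral parameter; freezing at $\mu=\lambda$ and invoking an ``affine bijection'' of eigenvalues is not justified at the eigenvalue-spacing scale $1/(n r)$ by Taylor expansion alone. The paper never performs such a linearization; it keeps the $z$-dependence in the transfer matrices $T^{w,z}_{i;r,s}$ and uses analytic continuation (Lemma~\ref{lem-psi_E,i}, Proposition~\ref{prop-EV}) to characterize eigenvalues as zeros of $\varepsilon\mapsto\det(\pmat{\II_r & \nul}\XX^{w,\lambda^\varepsilon_n}_{n;r,s}\smat{\II_r\\ \nul})$, and then uses \emph{analytic versions} of the SDE limit (Theorem~\ref{theo-B}~(ii)--(iii)) to pass zero sets through the limit. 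Second, \cite{VV,SV} cannot be cited as a black box: there the potential is a fixed i.i.d.\ family, whereas here $\eta_x$ is $s$-dependent, only asymptotically Gaussian, and its law depends on $\lambda$. The paper therefore restates the SDE convergence result in the form of Theorem~\ref{theo-B}, with $n$-dependent noise and uniform moment bounds supplied by Theorem~\ref{th-harmonic}, and recomputes the covariance structure (Proposition~\ref{prop-SDE}). Third, and most seriously, the proposal omits the elliptic/hyperbolic channel decomposition, the chaoticity condition on $Z=\diag(z_1,\dots,z_{r_{\rm e}})$, and the need to pass to subsequences $n_k$ with $Z^{n_k+1}\to\II_{r_{\rm e}}$; without this, the boundary term $\pmat{\II_r & \nul}\XX^{w,\lambda}_{n;r,s}\smat{\II_r\\ \nul}$ oscillates with $n$ and the eigenvalue point process does not converge. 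Finally, the limit is a genuine triple limit in a prescribed order---$n\to\infty$ with $m=s/n$ and $r$ fixed, then $m\to\infty$, then $r_{\rm e}\to\infty$---followed by a diagonal extraction; your condition ``$n\tau^2_\lambda/s\to0$ with $r\to\infty$'' collapses these stages and is not a regime for which \cite{VV,SV} state a Sine$_1$ limit.
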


\begin{remark}
The spectrum $\spec(H^w_{n_k,r_k,s_k}-\lambda)$, i.e. the eigenvalues of  $H^w_{n_k,r_k,s_k}-\lambda$ are considered as a random point process and the convergence holds in the sense of a weak limit of random point processes as described in the introduction for the GOE case.

\end{remark}

Let us go back to blocks in $\ZZ^3$ and consider the set $\ZZ_{ n \times r \times s}$ as introduced above, a $n\times r \times s$ grid within $\ZZ^3$.
Now let us introduce the discrete Laplacian $\Delta_{n,r,s}$ on $\ZZ_{n \times r \times s}$ 
but with periodic boundary conditions in the last coordinate direction. For the other directions we use Dirichlet boundary conditions. This corresponds to introducing an additional edge from points $(x_1,x_2,1)$ to 
$(x_1,x_2,s)$ for any $x_1,x_2$.  All the edges get weight one and we have no point weights.
This means the matrix (or weight function) associated to $\Delta_{n,r,s}$ is given by 
$$
\langle \delta_x,\,\Delta_{n,r,s} \,\delta_y\rangle\,=\Delta_{m,n,r}(x,y)=\begin{cases} 
 1 & \;\text{if}\; \|x-y\|_1=1 \\
        1 & \,\text{if}\;\; \{x_3,y_3\}=\{1,s\} \;\text{and}\;(x_1,x_2)=(y_1,y_2)\\
        0 & \,\text{else}
\end{cases}
$$
Using the same basis structure as before we obtain
$$
\Delta_{n,r,s}=\pmat{\Delta_{r,s} & \II_{rs} \\ \II_{rs} & \Delta_{r,s} & \ddots \\ & \ddots & \ddots & \II_{rs} \\ & & \II_{rs} & \Delta_{r,s} } 
$$
where in general $\II_{m}$ will denote the $m\times m$ identity matrix and $\Delta_{r,s}$ is an $rs\times rs$ tri-diagonal block matrix made of $s\times s$ blocks given by
$$
\Delta_{r,s}=\pmat{\Delta_{s}^p & \II_{s} \\ \II_{s} & \Delta^p_{s} & \ddots \\ & \ddots & \ddots & \II_{rs} \\ & & \II_{s} & \Delta^p_{s}}
\qtx{where} 
\Delta^p_s=\pmat{0 & 1 & 0 & & 1 \\ 1 & \ddots & \ddots & \ddots \\ 0 & \ddots & \ddots & \ddots & 0 \\ & \ddots & \ddots & \ddots & 1 \\
1 & & 0 & 1 & 0}\,\in\,\RR^{s\times s}\;.
$$
Because of the periodic boundary condition in the third coordinate we use the superscript 'p' for $\Delta_s^p$. This periodicity is reflected by the top-right and bottom-left entry $1$ in $\Delta^p_s$.
This Laplacian commutes with the orthogonal projection $\Pp$ onto the functions which are constant along the third coordinate direction, i.e.
$$
\Pp\Delta_{n,r,s}=\Delta_{n,r,s} \Pp \qtx{where} \Pp \psi(x_1,x_2,x_3) = \frac{1}{s} \sum_{k=1}^s \psi(x_1,x_2,k)\,.
$$
In matrix form using $s\times s$ blocks we have the block structure $\Pp=\diag(P_{s},\ldots,P_{s})$ with $nr$ such blocks.
Using $\Delta^p_s 1_s=2\,\cdot\,1_s $ implying $P_s \Delta_s^p = \Delta_s^p P_s = 2 P_s$ and $\II_s P_s=P_s\II_s=P_s$ we find
$$
\Pp \Delta_{n,r,s} \,=\, \Pp \Delta_{n,r,s} \Pp = \Aa^{2}_{n\times r,s}\;.
$$
Here, the $\Aa^2$ is not the the square of $\Aa$, rather in our notation as above it means that we take 
the adjacency matrix $\Aa^w_{n \times r,s}$ of the $\ZZ_{n\times r}$-antitree with the point weight $w=2$ on $\ZZ_{n\times r}$. 
This leads to the following corollary:

\begin{theo}
For almost all $\lambda\in I_{2,\nu}$, in particular almost all energies $\lambda\in [-2,-\sigma)\cup(\sigma,6]$,
there are sequences $s_k \gg n_k \gg r_k \to \infty$ ( $r_k\to\infty,\, s_k/n_k \to \infty$ can grow as slow as wanted in comparison to the growth of $n_k$) such that with the correct normalization $\Nn_k$ we find
$$
\Nn_k\, \spec(\Pp\Delta_{m_k,n_k,r_k}\,+\,\Vv_{m_kn_kr_k}-\lambda)\,\Rightarrow\, {\rm Sine}_1\;.
$$
\end{theo}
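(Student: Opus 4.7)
The plan is to recognize that this statement is not a new theorem but a direct translation of Theorem \ref{th-main} to the language of deformed discrete Laplacians on three-dimensional boxes. All the substantive work has already been performed in the preceding paragraphs: the identity $\Pp\Delta_{n,r,s}=\Pp\Delta_{n,r,s}\Pp=\Aa^{2}_{n\times r,s}$ is derived there from $\Delta_s^p\,1_s=2\cdot 1_s$, which yields $P_s\Delta_s^p=\Delta_s^p P_s=2P_s$ together with $P_s\II_s=\II_s P_s=P_s$.

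First I would use this identity to identify the random operator in question with a special case of the antitree Anderson model. Since $\Pp\Delta_{n,r,s}=\Aa^{2}_{n\times r,s}$ as $nrs\times nrs$ matrices, and since the full diagonal random matrix $\Vv_{nrs}$ is added on top in both formulations, one has the pointwise equality of random matrices
$$
\Pp\Delta_{n,r,s}+\Vv_{nrs}\;=\;\Aa^{2}_{n\times r,s}+\Vv_{nrs}\;=\;H^{2}_{n,r,s}.
$$
Consequently the random spectrum on the left-hand side of the assertion is the same random point set as $\spec(H^{2}_{n,r,s}-\lambda)$, and convergence of the rescaled point processes on one side is equivalent to convergence on the other.

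Next I would invoke Theorem \ref{th-main} with the specific point weight $w=2$. This provides, for almost every $\lambda\in I_{2,\nu}$, sequences $s_k\gg n_k\gg r_k\to\infty$ (with $s_k/n_k$ and $r_k$ growing as slowly as one wishes relative to $n_k$) and normalization constants $\Nn_k$ such that $\Nn_k\,\spec(H^{2}_{n_k,r_k,s_k}-\lambda)\Rightarrow{\rm Sine}_1$. Rewriting the left-hand side via the identification of the previous step produces exactly the claim, with the same sequences and the same normalizations.

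Finally, to obtain the explicit "in particular" clause, I would apply the inclusion $[-4+w,4+w]\setminus[-\sigma,\sigma]\subset I_{w,\nu}$ established earlier, which for $w=2$ reads $[-2,-\sigma)\cup(\sigma,6]\subset I_{2,\nu}$ (here assuming $\sigma<2$, which is the nontrivial case; otherwise the assertion is vacuous). The only point requiring any care is keeping track of the meaning of the superscript $w=2$ in $\Aa^{2}_{n\times r,s}$ (a point weight, not a matrix square), but no new analytic obstacle arises — the entire content of the statement is carried by Theorem \ref{th-main}.
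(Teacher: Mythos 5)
Your proposal is correct and follows exactly the paper's own route: the paper derives the identity $\Pp\Delta_{n,r,s}=\Aa^{2}_{n\times r,s}$ just before the statement and then presents the result as an immediate corollary of Theorem~\ref{th-main} with $w=2$, with the ``in particular'' clause coming from $[-4+w,4+w]\setminus[-\sigma,\sigma]\subset I_{w,\nu}$. No further argument is needed.
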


\begin{remark}
{\rm (i)} The Laplacian $\Delta$ on $\ZZ^3$ or $\NN^3$ can be seen as some sort of limit of $\Delta_{n,r,s}$ for $n,r,s\to\infty$ 
and has spectrum $[-6,6]$. Indeed, for any $n,r,s\in\NN$ we have $\spec\Delta_{n,r,s}\subset[-6,6]$. 
However the projection $\Pp$ reduces to the subspace with top energy for the Laplacian in the $x_3$ direction leading to $\spec \Pp \Delta_{n,r,s}\subset [-2,6]$ and for $n,r,s\to\infty$ one fills this interval.\\[.2cm]
\noindent {\rm (ii)} Modifying the proofs slightly one can replace $\Delta_{n,r,s}$ by the Dirichlet-Laplacian $\Delta_{n,r,s}^D$ on the $n\times r\times s$ grid. However, since $\Delta^D_{n,r,s}$ does not commute with $\Pp$ one needs to consider
$\Pp \Delta^D_{n,r,s} \Pp+\Vv_{nrs}$ and a similar calculation as above shows $\Pp \Delta^D_{n,r,s} \Pp\,=\,\Aa^{2-2/s}_{n\times r,s}$. So the whole difference is a further $s$-dependence which in the limit $s\to\infty$ has no influence. This resembles the fact that in the limit towards infinity the boundary conditions of the Laplacian do not matter.\\[.2cm]
\noindent {\rm (iii)} This result can not be seen as a limiting statistics for boxes on some fixed Anderson model on a separable, infinite dimensional Hilbert space because there is no limit of the projections $\Pp=\Pp(n,r,s)$ in $\ell^2(\ZZ^3)$ and  there is no operator limit of $\Pp \Delta_{n,r,s}$ for boxes of size $n\times r \times s$ approaching $\ZZ^3$. \\[.2cm]
\noindent {\rm (iv)} With $s_k/n_k$  and $r_k$ growing very slowly, less than any power of $n_k$, the corresponding subset $\ZZ_{n_k,r_k,s_k}$ of $\ZZ^3$ look like very thin rectangular shaped boxes in $\ZZ^3$.
\end{remark}

\section{Transfer matrices}
The block structure of $\Aa^w_{n\times r,s}$ will allow the analysis of the eigenvalue equation through transfer matrices of similar type as in
 \cite{Sa-AT}. In order to see this, let us identify $\psi\in\CC^{nrs}=(\CC^{rs})^n$ with
$(\psi_i)_{i=1}^n$, $\psi_i\in \CC^{rs}$ and  let $\psi_0=\vec 0=\psi_{n+1}$.
Moreover, let us write the random diagonal matrix $\Vv_{nrs}$ in diagonal block form of $n$ blocks of size $rs$, 
$\Vv_{nrs}=\diag(V_1,V_2,\ldots V_n)$. Then, we find
$$
\left(H^w_{n, r,s} \psi\right)_i\,=\,
P_{r,s} (\psi_{i+1}+\psi_{i-1})\,+\,(A^w_{r,s}+V_i) \,\psi_i\;.
$$
By definition of the random diagonal matrix $\Vv_{nrs}$, the 
$V_i$ are i.i.d. random diagonal $rs\times rs$ matrices where each $V_i$ has diagonal entries that are all i.i.d. real random variables and $\nu$-distributed.
The projection $P_{r,s}$ can be written as
\begin{equation} \label{eq-def-Phi}
P_{r,s}=\Phi_{r,s} \Phi_{r,s}^\top \qtx{where} \Phi_{r,s}=\pmat{1_s & & 0 \\ & \ddots \\ 0 & & 1_s} \,\in\, \RR^{rs\, \times\, r}
\end{equation}
is a $rs \times r$ matrix. Recall that $1_s\in \CC^s$ is the normalized 'mean-field column vector' $1_s=1/\sqrt{s} \,(1,1,\ldots,1)^\top$.  Note that $\Phi_{r,s}^\top\Phi_{r,s}=\II_r$.

For $\psi=(\psi_i)_{i=1}^n \in \CC^{nrs}$, $\psi_i\in\CC^{rs}$ let us define
\begin{equation}
\vec{u}_i=\vec{u}_i(\psi):=\Phi_{r,s}^\top \psi_i\,\in\,\CC^r\;,
\end{equation}
then, the eigenvalue equation $H^w_{n,r,s} \psi= z\psi$ gives
\begin{equation}\label{eq-eig1}
(z-A^w_{r,s}-V_i)\psi_i \,=\, \Phi_{r,s}\,(\vec{u}_{i+1}\,+\,\vec{u}_{i-1})\;.
\end{equation}
For $z\not\in \spec(A^w_{r,s}-V_i))$ it follows that
\begin{equation}\label{eq-eig2}
\vec{u}_i\,=\, \Phi_{r,s}^\top\,(z-A^w_{r,s}-V_i)^{-1}\,\Phi_{r,s}\,(\vec{u}_{i+1}\,+\,\vec{u}_{i-1})\;.
\end{equation}
For $z\in\CC, \;\im(z)>0$  we have $\Im(z-A^w_{r,s}-V_i)^{-1}<0$, where $\Im(A)=(A-A^*)/(2\imath)$ is the imaginary part in the $C^*$ algebra sense.
Using that $\Phi_{r,s}$ is injective we get
$\im(v^*  \Phi_{r,s}^\top\,(z-A^w_{r,s}-V_i)^{-1}\,\Phi_{r,s}v)<0$ for non zero vectors $v$ and therefore, $\Phi_{r,s}^\top\,(z-A^w_{r,s}-V_i)^{-1}\,\Phi_{r,s}$ is invertible. 
Hence, it is defined and invertible for all but finitely many values of $z\in \RR$ as the determinant is a rational function of $z$. If it is invertible we can re-write the eigenvalue equation in the following form, 
\begin{equation} \label{eq-trans}
\pmat{\vec{u}_{i+1} \\ \vec{u}_{i}}\,=\,
T^{w,z}_{i;r,s}\,\pmat{\vec{u}_{i} \\ \vec{u}_{i-1}}\;,\quad
T^{w,z}_{i;r,s}\,:=\,\pmat{\left(\Phi_{r,s}^\top\,(z \II_{rs}-A^w_{r,s}-V_i)^{-1}\,\Phi_{r,s}\right)^{-1} & -\II_r \\ \II_r & \nul} \;.
\end{equation}
We call $T^{w,z}_{i;r,s}$ the $i$-th transfer matrix at energy $z$ of $\Aa^w_{n\times r,s}$.
We write the energy or spectral parameter $z$ as an upper index because the dependence on $z$ is somewhat of the same flavor as the one on $w$. 

Now let $Q_s$ be an $s\times (s-1)$ matrix such that $(1_s,Q_s)$ is orthogonal, meaning that
$$
1_s 1_s^\top\,+\,Q_s Q_s^\top\,=\,\II_s\qtx{} 1_s^\top Q_s=\nul\;.
$$
Then, we let
$$
Q_{r,s}\,:=\,\pmat{Q_s \\ & \ddots \\ & & Q_s} \qtx{implying}
\Phi_{r,s}\Phi_{r,s}^\top+Q_{r,s} Q_{r,s}^\top=\II_{rs}\;.
$$
Hence, $(\Phi_{r,s}, Q_{r,s})$ is orthogonal. 
If $M$ is an invertible $rs\times rs$ matrix where
$\Phi_{r,s}^\top M \Phi_{r,s}$ is also invertible then the Schur complement formula gives that
\begin{align}
& \left(\Phi_{r,s}^\top M^{-1} \Phi_{r,s}\right)^{-1}\,=\,
\Phi_{r,s}^\top\,M\,\Phi_{r,s}\,-\,\Phi_{r,s}^\top\,M\, Q_{r,s} \,\left(Q_{r,s}^\top\,M\, Q_{r,s} \right)^{-1} Q_{r,s}^\top\,M\, \Phi_{r,s}\;.
\end{align}
Applying this to $M=z \II_{rs}-A^w_{r,s}-V_k$ and using
$P_s Q_s = \nul $, $A_{r,s}^w Q_{r,s}= \nul$
we obtain
\begin{align}
& \left(\Phi_{r,s}^\top\,(z \II_{rs}-A^w_{r,s}-V_i)^{-1}\,\Phi_{r,s}\right)^{-1}\,=\,
-\Phi_{r,s}^\top A^w_{r,s} \Phi_{r,s}\,+\,\left(\Phi_{r,s}^\top\left(z\II_{rs}-V_i \right)^{-1} \Phi_{r,s}\right)^{-1}\;.
\end{align}
Using $1_s^\top P_s 1_s =1$, \eqref{eq-def-A} and \eqref{eq-def-Phi} we get
$$
\Phi_{r,s}^\top A^w_{r,s} \Phi_{r,s}\,=\,\Delta^D_r\,+\, w \,\II_r
$$
where $\Delta^D_r$ is the Dirichlet Laplacian on the line $\ZZ_r=\{1,\ldots,r\}$ which is slightly different from the 
periodic one $\Delta^p_s$ used above,
$$
\Delta^D_r\,:=\,\pmat{0 & 1 \\ 1 & \ddots & \ddots \\ & \ddots& \ddots  & 1\\ & & 1 & 0}\;.
$$

The diagonal matrix $V_k$ can be further partitioned into $s\times s$ blocks to obtain
\begin{equation}
\left(\Phi_{r,s}^\top (z\II_{rs}-V_i)^{-1} \Phi_{r,s}\right)^{-1}=:V_{i;s}^z=\pmat{v^z_{i,1;s} \\ & \ddots \\ & & v^z_{i,r,s}}
\end{equation}
where 
\begin{equation} \label{eq-v-fin}
v^z_{i,j;s}\,:=\,\left(1_s^\top (z\II_s-V_{i,j})^{-1} 1_s\right)^{-1}\,=\,
\left( \frac1s \sum_{k=1}^s \frac{1}{z-v_{i,j,k}}\right)^{-1}
\end{equation}
with $v_{i,j,k}$ being the random potential at the point $(i,j,k)$ so that
\begin{equation}
V_i=\pmat{V_{i,1} \\ & \ddots \\ & & V_{i,r}}\;\in\,\RR^{rs\times rs}\;\qtx{with}
V_{i,j}=\pmat{v_{i,j,1} \\ & \ddots \\ & & v_{i,j,s}} \in \RR^{s\times s} \;.
\label{eq-Vi-parts}
\end{equation}

Therefore we finally obtain 
\begin{equation}\label{eq-T-fin}
T^{w,z}_{i;r,s}\,=\,\pmat{V^z_i-w\II_r-\Delta^D_r & -\II_r \\ \II_r & \nul}\;.
\end{equation}
For some parameters $z=\lambda\in \RR$ some of the inverses in the definition of the transfer matrix \eqref{eq-trans} are not defined.
However, whenever possible we define it by analytic extension of the map $z\mapsto T^{w,z}_{i;r,s}$. 
Note that by definiteness of the imaginary parts in the occurring inverses there is never a problem for non-real parameters $z\not\in\RR$.  For this reason we define:
\begin{definition}
The value $\lambda\in\RR$ (spectral parameter) is called singular for $H^w_{n,r,s}$ at the $i$-th slice if the map
$z\mapsto T^{w,z}_{i;r,s}$ is not defined in $\lambda$ after analytic extensions.  
We call $\lambda\in \RR$ singular for $H^w_{n,r,s}$ if it is singular at some slice $i=1,\ldots,n$. 
\end{definition}
Note that by \eqref{eq-v-fin} and \eqref{eq-T-fin}  the finite set of singular parameters for $H^w_{n,r,s}$ is contained
in the convex hull of the support of $\nu$ and hence inside  the interval $[-\sigma,\sigma]$.

\section{The spectrum}

For the spectrum and the determination of singular energies we may first split off some (trivial) part of the 
matrix $H^w_{n,r,s}$. For calculating the appearing Schur complement in the $i$-th transfer matrix
it is sufficient to consider the subspace
\begin{equation}
\VV_i\,:=\, {\rm span}\left[\;\bigcup_{k=0}^\infty\,\ran\,\left( (A^w_{r,s}+V_i)^k\,\Phi_{r,s}\,\right)\;\right]
\end{equation}
which is the union of all cyclic spaces of $A^w_{r,s}+V_i$ associated to the column vectors of $\Phi_{r,s}$.
It is clear that $A^w_{r,s}+V_i$ leaves the (random) subspace $\VV_i$ and its orthogonal complement $\VV_i^\perp$ invariant. 
Now, writing $\psi\in \CC^{nrs}$ as $(\psi_i)_{i=1}^n$ with $\psi_i\in \CC^{rs}$ we use the fact that
$
\CC^{nrs}\,\cong\, \prod_{i=1}^n \CC^{rs}
$
and with this isomorphy we can identify the product $\VV$ of the $\VV_i$ as subspace of $\CC^{nrs}$ and we also have a natural embedding $\hat \VV_i^\perp$ of the complements $\VV_i^\perp$ into $\CC^{nrs}$, 
\begin{equation}
\VV\,:=\,\prod_{i=1}^n \VV_i\,\subset\,\CC^{nrs} \qtx{and} 
\hat\VV_i^\perp\,:=\, \prod_{j=1}^{i-1} \{0\}\,\times\, \VV_i^\perp\,\times\,\prod_{j=i+1}^n \{0\}\;.
\end{equation}
This means
$\psi\in\VV\,\Leftrightarrow\,\forall i=1,\ldots,n\,:\, \psi_i\in\VV_i$ and $
\psi\in\hat\VV_i^\perp\,\Leftrightarrow\, \left( \psi_j=0  \;\;\text{for}\;\;j\neq i \;\;\text{and}\;\; \psi_i\in\VV_i^\perp\,\right)$.
We should mention that it is possible that $\VV_i^\perp=\{0\}$ for all $i$ and $\VV=\CC^{nrs}$ is the full space.
In fact, for continuous distributions $\nu$ of the single-site potentials this will happen with probability one. 

\begin{proposition}
We find including multiplicities that
$$
\spec(H^w_{n,r,s})\,=\, \spec(H^w_{n,r,s}|\VV)\,\cup\,\bigcup_{i=1}^n \spec(V_i\,|\,\VV_i^\perp)\;.
$$
where $\VV_i^\perp$ is non-trivial and $\spec(V_i\,|\,\VV_i^\perp)$ non-empty
if and only if there is $j\in\{1,\ldots,r\}$ such that $V_{i,j}$
has a multiple eigenvalue with $V_{i,j}$ as defined in \eqref{eq-Vi-parts}.
\end{proposition}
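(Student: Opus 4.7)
The plan is to exhibit an orthogonal direct sum decomposition $\CC^{nrs} = \VV \oplus \bigoplus_{i=1}^n \hat\VV_i^\perp$ into $H^w_{n,r,s}$-invariant subspaces and to identify $H^w_{n,r,s}|_{\hat\VV_i^\perp}$ with $V_i|_{\VV_i^\perp}$. The spectral identity, including multiplicities, then follows by summing spectra over the invariant pieces.

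The main structural input is the pair of factorizations $P_{r,s} = \Phi_{r,s}\Phi_{r,s}^\top$ and $A^w_{r,s} = \Phi_{r,s}(\Delta^D_r + w\II_r)\Phi_{r,s}^\top$, the second coming from $P_s = 1_s 1_s^\top$ and $P_s 1_s = 1_s$. Both operators take values in $\ran \Phi_{r,s}$ and vanish on $\ker \Phi_{r,s}^\top = \ran Q_{r,s}$. Since by construction $\ran \Phi_{r,s} \subseteq \VV_i$, one has $\VV_i^\perp \subseteq \ran Q_{r,s}$, so $P_{r,s}$ and $A^w_{r,s}$ both annihilate $\VV_i^\perp$. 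Plugging $\psi \in \hat\VV_i^\perp$ into $(H^w_{n,r,s}\psi)_k = P_{r,s}(\psi_{k+1}+\psi_{k-1}) + (A^w_{r,s}+V_k)\psi_k$ then produces $0$ for $k \neq i$ and $V_i\psi_i$ for $k = i$, giving invariance and the claimed action. Invariance of $\VV$ is clear once each $\VV_i$ is known to be $(A^w_{r,s}+V_i)$-invariant and to contain $\ran \Phi_{r,s}$, since the inter-slice coupling enters only through $P_{r,s}(\psi_{k\pm 1}) \in \ran \Phi_{r,s}$. To make $V_i|_{\VV_i^\perp}$ well defined I also need $V_i \VV_i \subseteq \VV_i$; this is not literally part of the cyclic definition, but can be obtained by induction on $m$ in $(A^w_{r,s}+V_i)^m \Phi_{r,s} x$ using $V_i v = (A^w_{r,s}+V_i)v - A^w_{r,s} v$ together with $A^w_{r,s} v \in \ran\Phi_{r,s} \subseteq \VV_i$.

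For the multiplicity criterion the plan is to show $\VV_i = \bigoplus_{j=1}^r \VV_i^{(j)}$, where $\VV_i^{(j)} \subseteq \CC^s$ is the cyclic subspace of $V_{i,j}$ generated by $1_s$, embedded in the $j$-th $s$-block. The containment $\supseteq$ is immediate: the $j$-th column of $\Phi_{r,s}$ is supported in block $j$ with entry $1_s$, and powers of $V_i$ preserve this block, producing $V_{i,j}^m 1_s$. The containment $\subseteq$ follows because $\bigoplus_j \VV_i^{(j)}$ contains $\ran \Phi_{r,s}$ and is invariant under $V_i$ (block-diagonal) and under $A^w_{r,s}$ (whose range lies in $\ran\Phi_{r,s}$), hence contains the smallest such subspace. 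Since $V_{i,j}$ is diagonal and $1_s$ has a non-zero component in every eigenspace, $\dim \VV_i^{(j)}$ equals the number of distinct diagonal entries of $V_{i,j}$. Consequently $\VV_i^\perp = \{0\}$ precisely when every $V_{i,j}$ has $s$ pairwise distinct eigenvalues, and in the contrary case a value appearing with multiplicity $m \geq 2$ in some $V_{i,j}$ contributes $m-1$ copies of itself to $\spec(V_i|\VV_i^\perp)$, which establishes the final claim.

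The only mildly non-routine point is the $V_i$-invariance of $\VV_i$, since the cyclic definition only guarantees invariance under the sum $A^w_{r,s}+V_i$; everything else reduces to bookkeeping enabled by the rank-one block structure $P_s = 1_s 1_s^\top$ and the factorization of the adjacency through $\Phi_{r,s}$.
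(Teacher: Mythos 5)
Your proof is correct and follows the same overall strategy as the paper: decompose $\CC^{nrs}$ into the $H^w_{n,r,s}$-invariant pieces $\VV$ and $\hat\VV_i^\perp$, and identify the restriction to $\hat\VV_i^\perp$ with $V_i|_{\VV_i^\perp}$ by showing $A^w_{r,s}$ and $P_{r,s}$ annihilate $\VV_i^\perp$. The paper proves this last fact by direct computation (from $\Phi_{r,s}^\top\psi_i=0$ one reads off $1_s^\top\psi_{i,j}=0$, hence $P_s\psi_{i,j}=0$), while you phrase it via the factorizations $P_{r,s}=\Phi_{r,s}\Phi_{r,s}^\top$, $A^w_{r,s}=\Phi_{r,s}(\Delta^D_r+w\II_r)\Phi_{r,s}^\top$; these are the same observation. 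Where you genuinely deviate is the multiplicity criterion: the paper argues abstractly with eigenvectors of $A^w_{r,s}+V_i$ lying in $\VV_i^\perp$, whereas you construct the explicit block decomposition $\VV_i=\bigoplus_j \VV_i^{(j)}$ into cyclic subspaces of the diagonal matrices $V_{i,j}$ generated by $1_s$. Your version is more constructive and yields extra information for free, namely $\dim\VV_i^{(j)}=\#\{\text{distinct entries of }V_{i,j}\}$ and the exact multiplicity with which a repeated value appears in $\spec(V_i|_{\VV_i^\perp})$; the paper's version gets to the yes/no criterion faster. One small economy you could make: the $V_i$-invariance of $\VV_i$ that you obtain by induction follows immediately, since $\VV_i^\perp$ is invariant under both $A^w_{r,s}+V_i$ (self-adjointness) and $A^w_{r,s}$ (it is annihilated), hence under their difference $V_i$, and then $\VV_i$ is $V_i$-invariant because $V_i$ is symmetric.
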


\begin{proof}
Since $\ran \Phi_{r,s} \in \VV_i$ we see that $H^w_{n,r,s}$ leaves $\VV$ for any $i=1,\ldots,n$ invariant. 
Similarly, for any $\psi_i\in \VV_i^\perp$ we have $P_{r,s}\psi_i=\Phi_{r,s}\Phi_{r,s}^\top \psi_i=0$ giving that
$H^w_{n,r,s}$ also leaves all the spaces $\hat\VV_i^\perp$ invariant and the restrictions of
$H^w_{n,r,s}$ to $\hat\VV_i^\perp$ is isomorphic to the restrictions of $A^w_{r,s}+V_i$ to $\VV_i^\perp$. 
Now for $\psi_i \in \VV_i^\perp\in \CC^{rs}$ we can split up $\psi_i$ once more into $r$-parts $(\psi_{i,j})_{j=1}^r$ by $\CC^{rs}=(\CC^s)^r$ and use the block structure for $A^w_{r,s}$ as in \eqref{eq-def-A} and $\Phi_{r,s}$ as in \eqref{eq-def-Phi}. Then
$$
0=\Phi_{r,s}^\top\psi_i=\pmat{1_s \\ & \ddots \\ & & 1_s}^\top \pmat{\psi_{i,1} \\ \vdots \\ \psi_{i,r}}\,=\,
\pmat{1_s^\top\psi_{i,1}\\ \vdots \\ 1_s^\top \psi_{i,r}}
$$
implies $1_s^\top \psi_{i,j}=0$ for all $j=1,\ldots,r$. This in turn implies $P_s \psi_{i,j}= 1_s 1_s^\top \psi_{i,j}=0$
and from \eqref{eq-def-A} we get $A^w_{r,s}\psi_i = 0$.
Therefore we have
$$
A^w_{r,s}\,|\,\VV_i^\perp\,=\, \nul \qtx{implying} \left(A^w_{r,s}+V_i\right)\,|\,\VV_i^\perp\,=\, V_i\,|\,\VV_i^\perp\;.
$$
By the considerations above we have
$$
H^w_{n,r,s}\,\cong\, H^w_{n,r,s}\,|\,\VV\,\oplus\, \bigoplus_{i=1}^n V_i\,|\,\VV_i^\perp
$$
in terms of an orthogonal sum of operators  (in fact matrices). The spectral decomposition follows.

Moreover, by construction, $\VV_i^\perp$ is non-trivial precisely if there is a non-zero eigenvector $\psi_i$ of $A^w_{r,s}+V_i$ which is orthogonal to all column vectors of $\Phi_{r,s}$. By the calculations 
above this is equivalent to finding $j$ and $\psi_{i,j}\neq 0$ such that
$1_s^\top \psi_{i,j}=0$ and $\psi_{i,j}$ is an eigenvector of $V_{i,j}$ as defined in \eqref{eq-Vi-parts}.
Using the fact that $V_{i,j}$ is diagonal, you can find such an eigenvector precisely if $V_{i,j}$ has an eigenvalue with multiplicity more than one.
\end{proof}

Considering the eigenvalue equation \eqref{eq-eig1} and \eqref{eq-eig2} a solution $\psi=(\psi_i)_{i=1}^n$ can be obtained from a solution $(\vec u_i)_i$ of the transfer matrix equation by taking 
$\psi_i=\Psi_{z,i}\vec u_i$ where $\Psi_{z,i}$ is a $rs \times r$ matrix given by
\begin{equation}\label{eq-psi-zi}
\Psi_{z,i}\,:=\,(z\II_{rs}\,-\,A^w_{r,s}\,-\,V_i)^{-1}\,\Phi_{r,s}\,\left(\Phi_{r,s}^\top\left(z\II_{rs}-A^w_{r,s}-V_i \right)^{-1}\,\Phi_{r,s}\right)^{-1}\;.
\end{equation}
\begin{lemma}\label{lem-psi_E,i}
For any non-singular energy $\lambda\in\RR$ the matrices $\Psi_{\lambda,i}$ are defined or can be defined by analytic extension of $z \mapsto \Psi_{z,i}$ at $z=\lambda$.
\end{lemma}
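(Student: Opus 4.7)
The plan is to show that the apparent poles of $\Psi_{z,i}$ at eigenvalues of $H_i := A^w_{r,s} + V_i$ are all removable whenever $\lambda$ is non-singular. The key is to work with the compact formula $\Psi_{z,i} = (z-H_i)^{-1}\,\Phi_{r,s}\,G(z)^{-1}$, where $G(z) := \Phi_{r,s}^\top (z-H_i)^{-1} \Phi_{r,s}$. By definition of non-singular, $G(z)^{-1}$ extends holomorphically across $z = \lambda$, so the only factor that can contribute poles is $(z-H_i)^{-1}$. If $\lambda \notin \spec H_i$, there is nothing to show.

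When $\lambda \in \spec H_i$, I would use the spectral decomposition $(z-H_i)^{-1} = \sum_\mu (z-\mu)^{-1} P_\mu$ (with $P_\mu$ the orthogonal projection onto the $\mu$-eigenspace, available because $H_i$ is real symmetric) and reduce to showing that $P_\lambda \Phi_{r,s} G(z)^{-1}$ vanishes at $z=\lambda$. Since $\VV_i$ is $H_i$-invariant, $P_\lambda$ splits as $P_\lambda^{\VV_i} + P_\lambda^{\VV_i^\perp}$, and the inclusion $\ran\Phi_{r,s}\subset\VV_i$ kills the second summand: $P_\lambda^{\VV_i^\perp}\Phi_{r,s}=0$. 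So the only real content is the $\VV_i$-piece.

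For that piece, choose an orthonormal basis $\{\phi_\alpha\}$ of $\ran P_\lambda^{\VV_i}$ and write the Laurent expansion
$$G(z) = \tfrac{B}{z-\lambda} + G_0 + O(z-\lambda),\qquad B = \Phi_{r,s}^\top P_\lambda^{\VV_i} \Phi_{r,s} = \sum_\alpha (\Phi_{r,s}^\top\phi_\alpha)(\Phi_{r,s}^\top\phi_\alpha)^\top.$$
Matching the $(z-\lambda)^{-1}$ coefficient in the identity $G(z) G(z)^{-1} = \II_r$ (using that $G(z)^{-1}$ is analytic at $\lambda$ with value $M_0 := G(\lambda)^{-1}$) forces $B M_0 = 0$, i.e. $\ran M_0 \subseteq \ker B$. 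Since $B$ is positive semidefinite, $\ker B = \{v \in \RR^r : (\Phi_{r,s}^\top\phi_\alpha)^\top v = 0 \text{ for all } \alpha\}$, so for every column $\vec x$ one obtains
$$P_\lambda^{\VV_i}\Phi_{r,s} M_0 \vec x = \sum_\alpha \phi_\alpha\,(\Phi_{r,s}^\top\phi_\alpha)^\top (M_0 \vec x) = 0.$$
This gives the required cancellation and $\Psi_{z,i}$ extends analytically to $\lambda$.

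The main obstacle I anticipate is simply isolating this cancellation mechanism; everything reduces to the single observation that the relation $G(z) G(z)^{-1} = \II_r$ combined with non-singularity forces $B M_0 = 0$, which is precisely the algebraic constraint needed to kill the potentially dangerous residue $P_\lambda \Phi_{r,s} G(\lambda)^{-1}$.
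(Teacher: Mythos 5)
Your proof is correct and takes a genuinely different route than the paper's. The paper first reduces to showing that $\varphi^\top\Psi_{\lambda,i}$ has a limit for each eigenvector $\varphi$ of $H_i := A^w_{r,s}+V_i$, then changes basis so that $\Phi_{r,s}\equiv\smat{\II\\\nul}$, writes $\lambda\II-H_i$ in $2\times2$ block form $\smat{A&B\\B^\top&D}$, and invokes the Schur complement identity $G(\lambda+\varepsilon)^{-1}=A+\varepsilon\II-B(D+\varepsilon\II)^{-1}B^\top$. The cancellation there comes from the observation that non-singularity forces $\ran B^\top\subset(\ker D)^\perp$, hence $BP_{\ker D}=0$, which combined with the eigenvector relation $A\varphi_1+B\varphi_2=0$ kills the constant term. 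Your argument replaces this coordinate computation with a residue-matching argument: expand $(z-H_i)^{-1}$ spectrally, write the Laurent series of $G(z)$ around $\lambda$, and read off $BM_0=0$ directly from the order $(z-\lambda)^{-1}$ coefficient of $G(z)G(z)^{-1}=\II_r$, then exploit the positive-semidefiniteness of $B=\Phi_{r,s}^\top P_\lambda\Phi_{r,s}$ to conclude $P_\lambda\Phi_{r,s}M_0=0$. This is cleaner and basis-free; indeed it can be streamlined further since $B=(P_\lambda\Phi_{r,s})^\top(P_\lambda\Phi_{r,s})$ has the same kernel as $P_\lambda\Phi_{r,s}$, so $BM_0=0$ gives $P_\lambda\Phi_{r,s}M_0=0$ immediately, and the detour through the orthonormal basis $\{\phi_\alpha\}$ and the splitting $P_\lambda=P_\lambda^{\VV_i}+P_\lambda^{\VV_i^\perp}$ is not actually needed. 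Both approaches establish exactly the same identity $P_\lambda\Phi_{r,s}G(\lambda)^{-1}=0$; the paper's carries some bookkeeping overhead from the explicit basis change, while yours isolates the algebraic mechanism in a coordinate-independent way.
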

\begin{proof} Let $\lambda$ be non singular for $H^w_{n,r,s}$.
If $\lambda\not \in \spec (A^w_{r,s}+V_i)$  then the statement is clear by existence of the first inverse in \eqref{eq-psi-zi} and existence of the last term at least by analytic extension in $\lambda$. So let $\lambda$ be an eigenvalue of $A^w_{r,s}+V_i$.
In order to show that $\Psi_{\lambda,i}$ is defined by analytic extension, it is sufficient to show that
$\varphi^\top \Psi_{\lambda,i}$ can be defined by analytic extension for any eigenvector $\varphi$ of  the real symmetric matrix $A^w_{r,s}+V_i$  because there is an orthonormal basis of eigenvectors.
So let $(A^w_{r,s}+V_i) \varphi=\lambda_0 \varphi$.
Then, for $\varepsilon\neq 0$, $|\varepsilon|$ small,
$$
\varphi^\top \psi_{\lambda+\varepsilon,i}\,=\, \frac{\varphi^\top \Phi_{r,s}}{\lambda+\varepsilon-\lambda_0} \left(\Phi_{r,s}^\top\left((\lambda+\varepsilon)\II_{rs}-A^w_{r,s}-V_i \right)^{-1}\,\Phi_{r,s}\right)^{-1}\,.
$$
For $\lambda\neq \lambda_0$ it is clear that the limit $\varepsilon\to 0$ exists as $\lambda$ is non singular and therefore the limit of the second term exists.
Let us now assume $\lambda=\lambda_0$. We need to use some Schur complement formulas. Since $\Phi_{r,s}^\top \Phi_{r,s}=\II_r$ we can choose some orthonormal basis for $\CC^{rs}$ and $\CC^r$ such that $\Phi_{r,s}\equiv \smat{\II \\ \nul}$. We may work in these bases and give $E\II-A^w_{r,s}-V_i$ and $\varphi$ the corresponding block structures  
$$
\lambda\,\II_{rs}-A^w_{r,s}-V_i\,\equiv\,\pmat{A & B \\ B^\top & D} \qtx{and} \varphi \equiv \pmat{\varphi_1 \\ \varphi_2}\,.
$$
The symbol $\equiv$ shall remind that this is not how the matrices are defined but their appearance after some basis change putting $\Phi_{r,s}$ in the block structure as indicated.
Then the eigenvalue equation for $\varphi$ transforms to
\begin{equation}\label{eq-rel-varphi12}
A\varphi_1+B\varphi_2=0, \quad B^\top\varphi_1 + D\varphi_2=0\;
\end{equation}
and we find using the Schur complement formula
\begin{align*}
& \left(\Phi_{r,s}^\top((\lambda+\varepsilon)\II_{rs}-A^w_{r,s}-V_i)^{-1} \Phi_{r,s}\right)^{-1} \Phi_{r,s} \varphi\,\equiv\,
(A+\varepsilon\II-B(D+\varepsilon\II)^{-1}B^\top)\varphi_1 \\
& \quad = \varepsilon \varphi_1 + A\varphi_1 + B(D+\varepsilon)^{-1} D \varphi_2\;\;
\stackrel{\varepsilon\to 0}{\longrightarrow}
A\varphi_1+B\varphi_2-BP_{ker} \varphi_2\,=\,-BP_{ker}\, \varphi_2
\end{align*}
where $P_{ker}$ is the orthogonal projection onto the kernel of $D$ (note that $D$ is self-adjoint).
We know that the limit $B(D+\varepsilon\II)^{-1} B^\top$ exists as $\lambda$ is not singular. Hence, for any vector $v$ we have that
$$
\lim_{\varepsilon\to 0} (B^\top v)^\top (D+\varepsilon \II)^{-1} B^\top v \qtx{exists which implies} B^\top v \in (\ker D)^\perp\;.
$$
Therefore $\ran B^\top\subset (\ker D)^\perp$ and $BP_{ker}=(P_{ker} B^\top)^\top=\nul$.
Hence,
$$
(A+\varepsilon \II-B(D+\varepsilon \II)^{-1}B^\top)\varphi_1\,\to\,0 \qtx{for} \varepsilon \to 0\;.
$$
This implies
\begin{align*}
 & \left(\varphi^\top\Psi_{\lambda+\bar\varepsilon,i}\right)^\top=\Psi_{\lambda+\bar \varepsilon,i}^\top\;\varphi\;\equiv\;
\frac{1}{\varepsilon} \left(A+\varepsilon\II -B(D+\varepsilon \II)^{-1} C \right)\varphi_1\,\to\,\\
&\qquad \frac{d}{dz}\,\left[\,\left(\Phi_{r,s}^\top (z\II_{rs}-A^w_{r,s}-V_i)^{-1} \Phi_{r,s}\, \right)^{-1}\,\Phi_{r,s} \varphi\,\right]_{z=\lambda}
\end{align*}
for $\varepsilon\to 0$, which exists because the extension of the Schur complement is analytic in $z=\lambda$.
\end{proof}

Let us now introduce the products of the transfer matrices:
\begin{equation}
\XX^{w,z}_{i;r,s}\,:=\, T^{w,z}_{i;r,s} \,T^{w,z}_{i-1;r,s} \,\cdots\, T^{w,z}_{2;r,s}\,T^{w,z}_{1;r,s}\,.
`\end{equation}

We finally obtain the key Proposition of this section

\begin{proposition}\label{prop-EV}
Let $\lambda\in \RR$ be non-singular for $H^w_{n,r,s}$.  Then, $\lambda$ is an eigenvalue of $H^w_{n,r,s}$ if and only if
we have either
\begin{equation}\label{eq-cond-eig}
\det\,\left( \,\pmat{\II_r & \nul}\; \XX^{w,\lambda}_{n;r,s}\;\pmat{\II_r \\ \nul}\, \right)\,=\, 0
\end{equation}
or $\lambda$ is an eigenvalue of $V_i\,|\,\VV_i^\perp$ for some $i=1,\ldots,n$.\\
Particularly, if $|\lambda|>\sigma$ then $\lambda$ is an eigenvalue of $H^w_{n,r,s}$ if and only if
\eqref{eq-cond-eig} holds.
\end{proposition}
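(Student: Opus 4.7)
The plan is to combine the preceding Proposition---which already isolates the second alternative---with a bijection between eigenvectors of $H^w_{n,r,s}$ on $\VV$ and Dirichlet-type solutions of the transfer matrix recursion. The preceding Proposition yields
$$
\spec(H^w_{n,r,s}) \,=\, \spec(H^w_{n,r,s}\,|\,\VV)\,\cup\,\bigcup_{i=1}^n \spec(V_i\,|\,\VV_i^\perp),
$$
so only the characterization of $\spec(H^w_{n,r,s}|\VV)$ via \eqref{eq-cond-eig} at non-singular $\lambda$ remains.

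For this I would establish, at non-singular $\lambda$, a bijection between nonzero eigenvectors $\psi=(\psi_i)_{i=1}^n\in\VV$ at eigenvalue $\lambda$ and nonzero sequences $(\vec u_i)_{i=0}^{n+1}$ in $\CC^r$ solving \eqref{eq-trans} with $\vec u_0 = \vec u_{n+1} = \vec 0$. The two maps are $\vec u_i := \Phi_{r,s}^\top \psi_i$ and $\psi_i := \Psi_{\lambda,i} \vec u_i$; they are mutual inverses because $\Phi_{r,s}^\top\Psi_{\lambda,i} = \II_r$ is immediate from \eqref{eq-psi-zi}, and $\Psi_{\lambda,i}$ is well-defined by Lemma \ref{lem-psi_E,i}. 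Three items need checking. (a) $\ran\Psi_{\lambda,i}\subset\VV_i$, so that $\psi\in\VV$: when $\lambda\notin\spec(A^w_{r,s}+V_i)$, Cayley--Hamilton writes $(\lambda-A^w_{r,s}-V_i)^{-1}$ as a polynomial in $A^w_{r,s}+V_i$, so the range lands in the cyclic space $\VV_i$; this persists under the analytic extension of Lemma \ref{lem-psi_E,i}. (b) Equivalence of the transfer matrix equation and \eqref{eq-eig1}: for $\lambda+\varepsilon\notin\spec(A^w_{r,s}+V_i)$, the algebraic identity $(\lambda+\varepsilon-A^w_{r,s}-V_i)\Psi_{\lambda+\varepsilon,i}\vec u_i = \Phi_{r,s}M_\varepsilon^{-1}\vec u_i$ (with $M_\varepsilon$ the Schur complement appearing in \eqref{eq-trans}) holds; both sides extend analytically to $\varepsilon=0$ by Lemma \ref{lem-psi_E,i}, and inserting the relation $M_0^{-1}\vec u_i = \vec u_{i+1}+\vec u_{i-1}$ read off from \eqref{eq-trans} at $\lambda$ recovers \eqref{eq-eig1}. (c) Non-degeneracy: if all $\vec u_i=0$ but $\psi\neq 0$, then $P_{r,s}\psi_i=\Phi_{r,s}\vec u_i=0$ reduces \eqref{eq-eig1} to $(A^w_{r,s}+V_i)\psi_i=\lambda\psi_i$; letting $E_\lambda^{(i)}$ denote the $\lambda$-eigenspace of $A^w_{r,s}+V_i$ with orthogonal projection $P_\lambda^{(i)}$, invariance of $\VV_i$ under $A^w_{r,s}+V_i$ gives $\VV_i\cap E_\lambda^{(i)} = \ran(P_\lambda^{(i)}\Phi_{r,s})$, while $\psi_i\in E_\lambda^{(i)}$ combined with $\psi_i\perp\ran\Phi_{r,s}$ upgrades to $\psi_i\perp\ran(P_\lambda^{(i)}\Phi_{r,s})$, forcing $\psi_i=0$, a contradiction.

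Given the bijection, the solutions with $\vec u_0=\vec 0$ are parameterized by $\vec u_1\in\CC^r$, with
$$
\pmat{\vec u_{n+1} \\ \vec u_n} \,=\, \XX^{w,\lambda}_{n;r,s}\,\pmat{\vec u_1 \\ \vec 0},
$$
so the Dirichlet condition $\vec u_{n+1}=\vec 0$ amounts to $\vec u_1$ lying in the kernel of $\pmat{\II_r & \nul}\,\XX^{w,\lambda}_{n;r,s}\,\pmat{\II_r \\ \nul}$, which is nontrivial iff \eqref{eq-cond-eig} holds. For the last clause, assumption (A1) confines the diagonal entries of each $V_i$ to $[-\sigma,\sigma]$, so $|\lambda|>\sigma$ excludes $\lambda$ from $\spec(V_i)$ and hence from $\spec(V_i|\VV_i^\perp)$, vacating the second alternative. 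The main obstacle is step (b) at $\lambda\in\spec(A^w_{r,s}+V_i)$: one has to verify that the Feshbach-type cancellation $BP_{ker}=0$ exploited in the proof of Lemma \ref{lem-psi_E,i} really does make the $\varepsilon\to 0$ limit legitimate on both sides, so that the transfer matrix recursion and \eqref{eq-eig1} remain equivalent even when $\Psi_{\lambda,i}$ and $M_0$ are defined only by analytic extension.
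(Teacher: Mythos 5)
Your proposal follows essentially the same strategy as the paper: invoke the preceding decomposition of the spectrum, and translate "eigenvector in $\VV$" into a Dirichlet-type solution of the transfer matrix recursion via the auxiliary matrices $\Psi_{\lambda,i}$. Your item (c) is the paper's "Claim 1" (the paper argues it more briefly: orthogonality of $\psi_i$ to $\ran\Phi_{r,s}$ together with $(A^w_{r,s}+V_i)\psi_i=\lambda\psi_i$ directly forces $\psi_i\in\VV_i^\perp$), and the backward direction of your item (b)---from a Dirichlet transfer solution to an eigenvector---is exactly the paper's converse argument using $\psi_i(\lambda+\varepsilon)=\Psi_{\lambda+\varepsilon,i}\vec u_i(\lambda+\varepsilon)$.

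There is, however, a gap in the forward direction that you correctly flag at the end but do not close. Your "bijection" argument invokes $\Phi_{r,s}^\top\Psi_{\lambda,i}=\II_r$, which only proves $\vec u\mapsto\psi\mapsto\vec u$ is the identity; the opposite composition $\Psi_{\lambda,i}\Phi_{r,s}^\top\psi_i=\psi_i$ for eigenvectors $\psi$ is what you need in order to "read off" the transfer recursion from \eqref{eq-eig1}, and this is not immediate when $\lambda\in\spec(A^w_{r,s}+V_i)$ and $\Psi_{\lambda,i}$ exists only by analytic extension. The paper avoids this reconstruction entirely: its Claim 2 perturbs the eigenvalue equation itself, writing
$$
\psi_i=\bigl((\lambda+\varepsilon)\II-A^w_{r,s}-V_i\bigr)^{-1}\bigl(\Phi_{r,s}(\vec u_{i+1}+\vec u_{i-1})+\varepsilon\psi_i\bigr)\,,
$$
multiplying by $\Phi_{r,s}^\top$ from the left, and sending $\varepsilon\to0$ so that the extra term $\varepsilon\,\Psi_{\lambda+\varepsilon,i}^\top\psi_i$ vanishes by Lemma~\ref{lem-psi_E,i}. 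This yields $M_0^{-1}\vec u_i=\vec u_{i+1}+\vec u_{i-1}$ directly, with $\vec u_i$ defined simply as $\Phi_{r,s}^\top\psi_i$, and never needs the oblique projection $\Psi_{\lambda,i}\Phi_{r,s}^\top$ to fix $\psi_i$. I would replace your bijection framing by this perturbative argument for the forward direction; the rest of your reasoning (items (a) and (c), the parametrization of Dirichlet solutions by $\ker(\pmat{\II_r & \nul}\XX^{w,\lambda}_{n;r,s}\smat{\II_r\\\nul})$, and the last clause via $\supp\nu\subset[-\sigma,\sigma]$) is sound and matches the paper.
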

Note that the second statement follows immediately as $\|V_i\|\leq \sigma \Rightarrow \spec(V_i)\subset [-\sigma,\sigma]$
 and all singular energies are also inside the interval $[-\sigma,\sigma]$.
\begin{proof}
First let $\lambda\in \spec(H^w_{n,r,s})$, either $\lambda\in \spec(V_i|\VV_i)$ for some $i=1,\ldots,n$ or $\lambda\in\spec(H^w_{n,r,s}|\VV)$.
For the letter case let $\psi=(\psi_i)_{i=1}^n$ be a corresponding non-zero eigenvector, note $\psi\in\VV$. \\
Claim 1: For some $i=1,\ldots,n$ we have  $\vec u_i=\Phi_{r,s}^\top\psi_i\neq \vec 0.$ 
If $\Phi_{r,s}^\top\psi_i=0$ for all $i=1,\ldots,n$, then
$H^w_{n,r,s}\psi=\lambda\psi$ also implies $V_i\psi_i=\lambda\psi_i$ and we have $\psi_i\in\VV_i^\perp$ and hence $\psi\in\VV^\perp$ implying $\psi=0$ as $\psi\in\VV$ as well.\\
Claim 2: $(\vec u_i)_i=(\Phi_{r,s}^\top\psi_i)_i$ satisfy the transfer matrix equation \eqref{eq-trans} at $z=\lambda$ with $\vec u_0=\vec u_{n+1}=\vec 0$. If all appearing inverses in the definition of $T^{w,\lambda}_{i;r,s}$ in \eqref{eq-trans} exist for all $i=1,\ldots,n$ then this is clear so we focus on the case when the transfer matrix is defined only by analytic extension.
The eigenvalue equation for $\lambda$ leads to
$$
\psi_i\,=\,((\lambda+\varepsilon)\II-A^w_{r,s}-V_i)^{-1}\,\left(\Phi_{r,s}(\vec u_{i+1}+\vec u_{i-1})\,+\,\varepsilon\,\psi_i \right)
$$
which after multiplying with $\Phi_{r,s}^\top$ from the left gives
$$
\left( \Phi_{r,s}^\top\left( (\lambda+\varepsilon)\II-A^w_{r,s}-V_i\right)^{-1} \Phi_{r,s}\right)^{-1}\vec u_i\,=\,
\vec u_{i+1}+\vec u_{i-1}\,+\,\varepsilon\,\Psi_{\lambda+\varepsilon,i}^\top\,\psi_i
$$
In both equations we have to set $\vec u_0=\vec 0$ for $i=1$ and $\vec u_{n+1}=\vec 0$ for $i=n$. 
With Lemma~\ref{lem-psi_E,i} the limit $\varepsilon\to 0$ shows that
$(\vec u_i)_i$ satisfies the transfer matrix equation with the transfer matrices defined by analytic extension to $\lambda$.

As not all of the $\vec u_i$ are zero, and $\vec u_0=\vec 0$,
we find that $\vec u_1\neq \vec 0$ and we have
\begin{equation}\label{eq-cond-eig2}
\pmat{\II_r & \nul}\; \XX^{w,\lambda}_{n;r,s}\;\pmat{\II_r \\ \nul}\,\vec u_1\,=\,\vec 0\,.
\end{equation}
This implies \eqref{eq-cond-eig}.

Conversely, assume \eqref{eq-cond-eig}, then we find $\vec u_1 \neq \vec 0$ satisfying \eqref{eq-cond-eig2}. 
We again focus on the case where one or more of the transfer matrices at $z=\lambda$ are only defined by analytic extension.
We let 
$\vec u_i\equiv \vec u_i(\lambda+\varepsilon)$ be defined by the transfer matrix equation, i.e.
$\smat{\vec u_{i+1} \\ \vec u_{i}} = \XX^{w,\lambda+\varepsilon}_{i;r,s} \smat{\vec u_1 \\ \vec 0}$. Note that we will have 
$\vec u_{n+1}(\lambda)=\vec 0$ by \eqref{eq-cond-eig2}. Let us define
$$
\psi_i(\lambda+\varepsilon)\,:=\,\Psi_{\lambda+\varepsilon,i}\,\vec u_i(\lambda+\varepsilon)\qtx{and}
\psi(\lambda+\varepsilon)=(\psi_i(\lambda+\varepsilon))_{i=1}^n\,. 
$$
For $\varepsilon\neq 0$ and small where all inverses in the definition of the transfer matrix exist we obtain
$$
\left(H^w_{n,r,s}-(\lambda+\varepsilon)\II_{nrs}\right)\,\psi(\lambda+\varepsilon)\,=\, \varphi(\lambda+\varepsilon)\,=\,(\varphi_i(\lambda+\varepsilon))_{i=1}^n
$$
where 
$$
\varphi_i(\lambda+\varepsilon)=\vec 0 \qtx{for} i=1,\ldots,n-1 \qtx{and}
\varphi_{n}(\lambda+\varepsilon)\,=\,-\Phi_{r,s}\,\vec u_{n+1}(\lambda+\varepsilon)\;.
$$
In the limit $\varepsilon\to 0$ with Lemma~\ref{lem-psi_E,i} we obtain that $\psi(\lambda)$ is a non-zero eigenvector for the eigenvalue $\lambda$.
\end{proof}

\section{Effective energy, effective potential and elliptic channels}

For $|\lambda|>\sigma$  the random variables $(v^\lambda_{i,j;s})_{i,j}$ are well defined and independent identically distributed, the distribution depends on $\lambda$ and $s$. Moreover, the law of large numbers gives for $|\lambda|>\sigma$ and $s\to\infty$ a limit distribution concentrated on the point $h_\lambda$. From \eqref{eq-T-fin} we thus define the effective energy by
\begin{equation}\label{eq-def-E}
E=E(\lambda)\,:=\, h_\lambda\,-\,w\;.
\end{equation}
Also note that $h_\lambda^{-1}=\EE(1/v^\lambda_{i,j;s})=\EE(1/(\lambda-v))$ where $v$ is a $\nu$-distributed random variable. 
Another important quantity will be the $\lambda$-dependent variance
\begin{equation}\label{eq-def-sigma}
\sigma^2_\lambda\,:=\, \int \left((\lambda-v)^{-1}-h_\lambda^{-1}\right)^2\,d\nu(v)\,=\,\EE\left(\frac{1}{\lambda-v}-\frac1{h_\lambda}\right)^2
\end{equation}
Moreover, let us define 
\begin{equation}\label{eq-defs-WY}
\frac1s W_{s}(\lambda)\,:=\,\EE(v^\lambda_{i,j;s})\,-\,h_\lambda \qtx{and} \frac{1}{\sqrt{s}}\,Y_{i,j;s}(\lambda)\,:=\, v^\lambda_{i,j;s}\,-\,h_\lambda\,-\,\frac1s \,W_s(\lambda)
\end{equation}
From now  on we mostly make considerations for a fixed $\lambda\not\in[-\sigma,\sigma]$ and will omit the $\lambda$-dependence most of the time.
Note that $\EE(Y_{i,j;s})=0$ and in $(i,j)$ we have a family of real (for $\lambda$ real), independent identically distributed random variables.
Harmonic mean estimates for bounded random variables as in Theorem~\ref{th-harmonic} give
\begin{equation}\label{eq-W}
W_s(\lambda)\,=\,h_\lambda^3\,\sigma^2_\lambda\,+\,\Oo(1/s)
\end{equation}
\begin{equation}\label{eq-Y}
\EE\left( Y_{i,j;s}^2 \right)\,=\,h_\lambda^4\,\sigma^2_\lambda\,+\, \Oo(1/s) \qtx{and} \sup_{s\in\NN} \EE\left(Y_{i,j;s}^{2n}\right)\,\leq\,C_n\;.
\end{equation}
The error bounds are uniform in $\lambda$ on compact sets outside $[-\sigma,\sigma]$ (including compact subsets of $\CC$). 

The upper left $r\times r$ block entry of the transfer matrices are given by
\begin{equation}\label{eq-T-part}
E\,\II_r\,+\,\frac{1}{\sqrt{s}} Y_{i;s}\,+\,\frac{1}{s}\,W_s\,\II_r\;-\;\Delta^D_r
\end{equation}
where 
\begin{equation}
Y_{i;s}\,=\,\diag(Y_{i,1;s},\ldots,Y_{i,r;s})\;
\end{equation} 
is the effective random potential in the $i$-th slice.
In the $s\to\infty$ limit the eigenvalues and eigenvectors of $E \II_r-\Delta^D_r$ will classify some of the asymptotic behavior of the products.
Let us note that
$
\lambda\,\in\,I_{w,\nu}$ implies $E(\lambda)\,\in\,(-4,4)\;.
$ Moreover,
$E(\lambda)$ is a continuous and strictly monotone function of $\lambda$ in $I_{w,\nu}$.
As in \cite{SV} we now separate elliptic and hyperbolic channels and diagonalize $\Delta^D_r$  by
the orthogonal matrix
$$
O_{jk}\,:=\,\sqrt{2/(r+1)}\,\sin(\pi\,jk\,/\,(r+1))\;,\quad j,k=1,\ldots,r\;.
$$
The corresponding $j$-th eigenvector of $\Delta^D_r$ corresponding to the $j$-th column vector of $O$ is given by
$$
a_j\,=\,2\,\cos(\pi j\,/\,(r+1))\;,\quad j=1,\ldots,r
$$
so that 
$$
O^\top\, \Delta^D_r\,O\;=\;\diag(a_1,\ldots,a_r)\;.
$$
We focus on the case $-4<E(\lambda) \leq 0$, the other case is symmetrical.
In this case $E-a_j < 2$. In the notions of \cite{SV} we have a parabolic channel if there exists $j$ such that $|E-a_j|=2$ which in this case means $E-a_j=-2$. For any given $r$, there are $r$ such values of $E$ (and of $\lambda$). The union over $r\in\NN$ gives some countable set of values in $E$ and $\lambda$ respectively. We will omit these values.
Then, if there is no parabolic channel, there is $r_{\rm h}=r_{\rm h}(r,E)$ such that
\begin{align*}
&E-a_j\,<\,-2 &\qtx{for}& j=1,\ldots,r_{\rm h} &\qtx{(hyperbolic channels)} \\
-2 \,<\, &E-a_j \,<\, 2 &\qtx{for}& j=r_{\rm h}+1,\ldots, r &\qtx{(elliptic channels)}
\end{align*}
So we have $r_{\rm h}$ hyperbolic and $r_{\rm e}:=r-r_{\rm h}$ elliptic channels.
Note that for any fixed $E$ and $r\to\infty$, $r_{\rm e}=r_{\rm e}(r,E)$ is of the order of $r$, $r_{\rm e}\sim cr$ for some $c>0$.
Then we define $\gamma_j\in \RR$ and $z_j\in\CC$, $|z_j|=1$ by
\begin{align*}
\gamma_j+\gamma_j^{-1}\,&=\,E-a_j\,, \quad &|\gamma_j|<1&\,,& &\qtx{for} j=1,\ldots,r_{\rm h} &\\
z_j+z_j^{-1}\,&=\,E-a_{j+r_{\rm h}}\,, \quad &|z_j|=1&\,, \im(z_j)>0\,,& &\qtx{for} j=1,\ldots,r_{\rm e} &\;
\end{align*}
and as in \cite{SV} we define
\begin{equation}\label{eq-Gamma}
\Gamma\,=\,\diag(\gamma_1,\ldots,\gamma_{r_{\rm h}})\,,\quad Z=\diag(z_1,\ldots,z_{r_{\rm e}})\;
\end{equation}
as well as
\begin{equation}\label{eq-Qq}
U:=\pmat{Z^* \\ & Z}\;, \quad \tilde O:=\pmat{O \\ & O}\qtx{and} \Qq:=\pmat{\Gamma & & &  \Gamma^{-1} \\ & Z^* & Z \\ \II_{r_{\rm h}} & & & \II_{r_{\rm h}}\\ & \II_{r_{\rm e}} & \II_{r_{\rm e}} }\;.
\end{equation}
Here, $U$ is a $2r_{\rm e} \times 2r_{\rm e}$ diagonal matrix, $\tilde O$ a $2r\times 2r$ orthogonal matrix written in $r\times r$ blocks 
and $\Qq$ a $2r \times 2r$ matrix  where the rows are divided in 4 blocks of sizes $r_{\rm h},\, r_{\rm e},\, r_{\rm e},\, r_{\rm h}$ and the columns in 4 blocks of sizes $r_{\rm h},\,r_{\rm e},\,r_{\rm h},\,r_{\rm e}$. All the non-zero blocks indicated above are diagonal square matrices.
These matrices depend on $\lambda$. In order to get the eigenvalue processes we will have to vary the spectral parameter around $\lambda$ but we will use these fixed $\Qq$ $\tilde O$ and $U$ to describe our basis change cf. \eqref{eq-Tt}.

But primarily let us set one more demand on the choice of $\lambda$ or better $E(\lambda)$, respectively.
For fixed $r$ the value $r_{\rm h}$ changes exactly at the points $E=E(\lambda)$ where we have some parabolic channel.
Hence,  $I(r_0):=\{\lambda\in I_{w,\nu}\,:\,r_{\rm e}(r,E(\lambda))=r_0 \}$ is a union of intervals where $r_{\rm h}$ and $r_{\rm e}$ are constant
and $Z=Z(\lambda)$ is an analytically dependent diagonal $r_0\times r_0$ matrix. 
\begin{definition}
We say that the matrix $Z=\diag(z_1,\ldots,z_{r_{\rm e}})$ with $|z_j|=1, \im(z_j)>0$
is {\bf chaotic}, if all of the following apply for all $i,\,j,\,k,\,l\in\{1,\ldots,r_{\rm e}\}$
\begin{align*}
& z_iz_jz_kz_l\,\neq\,1\,,\quad \bar z_i z_j z_k z_l \,\neq\, 1\,, \\
& \bar z_i \bar z_j z_k z_l\,\neq 1\, \qtx{unless} \{i,j\}\,=\,\{k, l\}\;.
\end{align*}
\end{definition}

Then \cite[Lemma 5.2]{SV} gives the following.
\begin{lemma}
For each $r_{0}>0$ and Lebesgue almost all $\lambda\in I(r_0)$ we find that $Z$ as defined above is chaotic and moreover for any unitary diagonal 
$r_{0} \times r_{0}$ matrix $Z_*$ there is an increasing sequence $(n_k)_k$ of integers such that 
$Z^{n_k+1}\,\to\,Z_*$ for $k\to \infty$.
\end{lemma}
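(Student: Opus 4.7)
The plan is to parametrize the elliptic-channel eigenvalues of $Z$ on a torus and reduce both parts of the statement to standard Weyl/Kronecker equidistribution together with a real-analytic non-degeneracy argument.

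First I would set $z_j(\lambda)=e^{\imath\theta_j(\lambda)}$ with
\begin{equation*}
\theta_j(\lambda)\,=\,\arccos\!\left(\tfrac{E(\lambda)-a_{j+r_{\rm h}}}{2}\right)\,\in\,(0,\pi),\qquad j=1,\ldots,r_{\rm e}.
\end{equation*}
On each connected component of $I(r_0)$, the effective energy $E(\lambda)=h_\lambda-w$ is real-analytic and strictly monotone (as recalled just before the definition of $\Qq$), so each $\theta_j$ is real-analytic with
\begin{equation*}
\theta_j'(\lambda)\,=\,-\frac{E'(\lambda)}{\sqrt{\,4-(E(\lambda)-a_{j+r_{\rm h}})^2\,}}\,\neq\,0.
\end{equation*}
The key analytic fact I need is $\RR$-linear independence of $\theta_1',\ldots,\theta_{r_{\rm e}}'$ on each component: as $E(\lambda)\to a_{m+r_{\rm h}}\pm 2$ (a parabolic energy, which lies on the boundary of $I(r_0)$), the $m$-th function blows up while all others stay bounded, so any vanishing linear combination forces its $m$-th coefficient to be zero, and then by iteration all coefficients vanish.

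Next I would handle the chaoticity part. Each of the forbidden resonances $z_iz_jz_kz_l=1$, $\bar z_iz_jz_kz_l=1$, $\bar z_i\bar z_jz_kz_l=1$ (outside the trivial case $\{i,j\}=\{k,l\}$ which the definition already excludes) is equivalent to a relation $\sum_m c_m\theta_m(\lambda)\in 2\pi\ZZ$ for a fixed nonzero integer vector $(c_m)$ taking values in $\{-2,-1,0,1,2\}$. For fixed $(c_m)$ and $m_0\in\ZZ$, the set $\{\lambda\in I(r_0):\sum_m c_m\theta_m(\lambda)=2\pi m_0\}$ is the zero set of a real-analytic function whose derivative $\sum_m c_m\theta_m'$ is not identically zero by the linear independence above; hence this zero set is discrete. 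Since there are only countably many pairs $((c_m),m_0)$, the union is a Lebesgue-null subset of $I(r_0)$, outside of which $Z$ is chaotic.

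For the density claim I would use Kronecker's theorem: the orbit $\{(n+1)(\theta_1(\lambda),\ldots,\theta_{r_{\rm e}}(\lambda))\bmod 2\pi:n\in\NN\}$ is dense in $\TT^{r_{\rm e}}$ iff $1,\theta_1(\lambda)/(2\pi),\ldots,\theta_{r_{\rm e}}(\lambda)/(2\pi)$ are $\QQ$-linearly independent. Applying the same analytic-null-set argument to every nontrivial integer relation $c_0+\sum_j c_j\theta_j(\lambda)/(2\pi)=0$ shows that this independence fails only on a null set. On the intersection with the chaoticity full-measure set, for every diagonal unitary $Z_*=\diag(e^{\imath\phi_1},\ldots,e^{\imath\phi_{r_{\rm e}}})$ we can choose $n_k\uparrow\infty$ such that $(n_k+1)\theta_j(\lambda)\to\phi_j\pmod{2\pi}$ simultaneously for all $j$, which yields $Z^{n_k+1}\to Z_*$.

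The main obstacle is the non-vanishing of the analytic functions appearing in the two measure-zero arguments. Both reduce to the $\RR$-linear independence of the $\theta_m'$ on each component of $I(r_0)$, which I would establish cleanly by the singularity argument above: the functions $(4-(E-a_{m+r_{\rm h}})^2)^{-1/2}$ have pairwise disjoint branch points exactly at the parabolic energies removed in the construction of $I(r_0)$, so a vanishing linear combination must be trivial.
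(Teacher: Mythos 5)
The paper does not actually prove this lemma; it simply quotes it as Lemma~5.2 of \cite{SV}, so there is no in-paper argument to compare against. Your reconstruction uses the standard route — parametrize the elliptic phases $\theta_j(\lambda)=\arccos\bigl((E(\lambda)-a_{j+r_{\rm h}})/2\bigr)$, rule out each fixed integer resonance $\sum_m c_m\theta_m\in 2\pi\ZZ$ on a discrete set by real-analyticity, take a countable union, and then invoke the Kronecker/Weyl theorem for density — and this is almost certainly the same mechanism as in \cite{SV}. Two small inaccuracies are worth fixing. First, the integer vectors coming from the chaoticity constraints take values up to $\pm 4$, not $\pm 2$ (the relation $z_i^4=1$ from $i=j=k=l$ in $z_iz_jz_kz_l\neq 1$); this is cosmetic since any countable family of nonzero integer vectors is handled. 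Second, and more substantively, your independence argument is stated too loosely: for a fixed connected component of $I(r_0)$, in $E$-coordinates the interval is $(a_{r_{\rm h}+1}-2,\,a_{r_{\rm h}}-2)$, so only the branch point $a_{r_{\rm h}+1}-2$ of $\theta_1'$ lies on its boundary, while $a_{m+r_{\rm h}}-2$ for $m\ge 2$ lie strictly outside. The ``by iteration'' step therefore needs the observation that once $c_1=0$, the remaining identity $\sum_{m\ge 2}c_m\,(4-(E-a_{m+r_{\rm h}})^2)^{-1/2}\equiv 0$ — which holds on the component and is an identity between functions analytic past $a_{r_{\rm h}+1}-2$ — extends by analytic continuation to the larger interval $(a_{r_{\rm h}+2}-2,\,a_{r_{\rm h}}-2)$, whereupon the singularity at $a_{r_{\rm h}+2}-2$ forces $c_2=0$, and so on. (It is also cleanest to first divide out $E'(\lambda)$, i.e.\ argue on the level of the $E$-variable function $\sum_m c_m(4-(E-a_{m+r_{\rm h}})^2)^{-1/2}$, since $E'$ may have isolated zeros even though $E$ is strictly monotone; this avoids any issue with $\theta_j'$ failing to be pointwise nonzero.) With these points spelled out the proof is correct.
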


So we will consider $\lambda$ and $r$ such that we have elliptic channels\footnote{this will be the case for fixed $\lambda\in I_{w,\nu}$ and $r$ big enough}, there is no parabolic channel and such that $Z$ is chaotic.

\section{The limit of thin  boxes with fixed width}

We will first look at the situation $s=mn$ with $m$ and $r$ constant and consider the eigenvalue process for $n\to \infty$.
Furthermore, we scale energy differences to $\lambda$ by $n(h_\lambda^2 \sigma_\lambda^2+1)$ (cf. \eqref{eq-h-pert}) and define
\begin{equation}\label{eq-def-l-eps}
\lambda^\varepsilon_n\,:=\,\lambda+\frac{\varepsilon}{n(h_\lambda^2\sigma^2_\lambda+1)} \qtx{so that}
E\left(\lambda^\varepsilon_n \right)\,=\,
E(\lambda)\,+\,\frac{\varepsilon}{n}\,+\,\Oo\left(\frac{\varepsilon^2}{n^2} \right)
\end{equation}
Here, the error bound is uniform for $\varepsilon/n$ varying inside a compact set so that 
$\lambda^\varepsilon_n \in I_{w,\nu}$.

Let us map out the correspondences between notations here and in \cite{SV} in order to understand the relations of the propositions.
In principle $s$ amounts to the disorder strength and $\frac{1}{s}$ corresponds to $\lambda^2$ or better to $\sigma^2 \lambda^2$ in \cite[Section~5]{SV}, $m$ amounts to $\sigma^{-2}$.
Note in particular that the use of $\lambda$ as in this paper does not correlate to the use of $\lambda$ in \cite{SV}. But $E(\lambda)$ is the more important quantity here which corresponds to $E$ in \cite{SV}, the use of $\varepsilon$ is the same.
The size of the transfer matrices $r$ here corresponds to $d$ in \cite{SV}, moreover $r_{\rm h}$ and $r_{\rm e}$  correspond to $d_h$ and $d_e$ in \cite{SV}, respectively.

\vspace{.2cm}

Since $s=mn$ from now on, we will omit the index $s$ and replace it by $m$ and $n$. Because of the different roles of $m$ and $n$ we will place the indices differently. This way notations correspond somewhat to the ones used in \cite{SV}.
Then using the definitions \eqref{eq-Gamma} and \eqref{eq-Qq} for some fixed $\lambda$ without parabolic channel such that $Z$ is chaotic we define
\begin{equation}\label{eq-Tt}
\Tt^{\varepsilon,m}_{i;r,n}\,:=\, \Qq^{-1}\, {\tilde O}^\top\,T^{w,\lambda^\varepsilon_n}_{i;r,mn}\;\tilde O\,\Qq\;.
\end{equation}
For $\varepsilon=0$ the limit $s\to \infty$ gives the non-random matrix
\begin{equation}\label{eq-Tt-channels}
\Tt_r\,=\,\lim_{s\to\infty}  \Tt^{0,m}_{i;r,s}\,=\,\pmat{\Gamma \\ & U \\ & & \Gamma^{-1}}
\end{equation}
written in blocks of sizes $r_{\rm h},\,2r_{\rm e},\,r_{\rm h}$.
Note that the upper block has the eigenvalues of size (absolute value) $<1$, the middle part the eigenvalues of size $1$ and the lower part the eigenvalues of size $>1$. Hence, when considering products the upper part is decaying, the lower part growing and the middle part stays of order 1. For the products we will look at the same basis changes and scaling and define
\begin{equation}
\Xx^{\varepsilon,m}_{i;r,n}\,:=\, \Qq^{-1}\, {\tilde O}^\top\,\XX^{w,\lambda^\varepsilon_n}_{i;r,mn}\;\tilde O\,\Qq\;=\;\Tt^{\varepsilon,m}_{i;r,n}\,\Tt^{\varepsilon,m}_{i-1;r,n}\,\cdots\,\Tt^{\varepsilon,m}_{1;r,n}\,
\end{equation}
We also have to consider the impact of the perturbation in the spectral parameter.
From \eqref{eq-defs-WY}, \eqref{eq-T-part}, \eqref{eq-def-l-eps} we obtain using $s=mn$ that
\begin{equation}\label{eq-Tt-parts}
\Tt^{\varepsilon,m}_{i;r,n}\,=\, \Tt_r\,+\,\frac{1}{\sqrt{m}}\,\frac{1}{\sqrt{n}}\,\Yy_{i;n}^{\varepsilon,m}\,+\,\left(\frac{\varepsilon}{n}\,+\,\frac{W^{\varepsilon,m}_{n}}{mn} \right)\,\Ww_r\,+\,\Oo\left(\frac{\varepsilon^2}{n^2} \right)
\end{equation}
where 
$$
\Yy^{\varepsilon,m}_{i;n}\,:=\, \Qq^{-1}\,\tilde O^\top\,\pmat{Y_{i;mn}(\lambda^{\varepsilon}_n) & \nul \\ \nul & \nul}\,\tilde O\, \Qq\,,\;\;
W^{\varepsilon,m}_n\,:=\,W_{mn}(\lambda^\varepsilon_n)\,,\;\;
\Ww_r\,:=\,\Qq^{-1}\,\pmat{\II_r & \\ & \nul}\,\Qq\;.
$$
The error term is non-random and the bound is uniform for $\varepsilon/n$ varying in compact sets where $\lambda_n^\varepsilon$ stays outside $[-\sigma,\sigma]$.
Equation \eqref{eq-Tt-parts} is in essence the analogue of \cite[equation (5.5)]{SV} where $1/\sqrt{m}$ here plays the role of $\sigma$ there. 
Some difference is that here the randomness and the drift-term have some dependence on $\varepsilon$ and $m$, however, this dependence will not matter in the limit.
Note that by construction $\EE(\Yy^{\varepsilon,m}_{i;n})=\nul$.
Using \eqref{eq-W} and \eqref{eq-Y} we find for $\varepsilon$ varying inside compact sets that 
\begin{equation}\label{eq-errors}
W^{\varepsilon, m}_n\,=\,h_\lambda^3\sigma_\lambda^2\,+\,\Oo\left(\frac{1}{mn}, \frac{\varepsilon}{n} \right)\;,\;\;
\EE\left(Y_{i,j;mn}^2(\lambda_n^\varepsilon)\right)\,=\,h_\lambda^4 \sigma_\lambda^2\,+\,\Oo\left(\frac{1}{mn}, \frac{\varepsilon}{n} \right)\;.
\end{equation}
The error terms mean that the reminder terms are bounded by 
$C(1/(mn)+|\varepsilon|/n)$ with a uniform $C$ as long as $\varepsilon/n$ stays inside some compact interval so that always $\lambda^\varepsilon_n\in I_{w,\nu}$. In particular for any compact set $K$ there is $N$ such that for $n>N$ and $\varepsilon\in K$ this bound is uniform.

Using these bounds, the moment bound in \eqref{eq-Y} and the independence of the $Y_{i,j;s}$ we see that Theorem~\ref{theo-B} is applicable towards an SDE limit for the products $\Xx^{\varepsilon,m}_{i;n}$ for fixed $\varepsilon, m$ with scaling $i\sim n$. 
More precisely, from the decomposition of $\Tt_r$ in \eqref{eq-Tt-channels}  define
$$
\Pp_{\leq 1}\,=\,\pmat{\II_{r_{\rm h}+2r_{\rm e}} \\ \nul}\,\in\,\RR^{2r\times (r_{\rm h}+2r_{\rm e})}\;,
$$
and let
$$
X^{\varepsilon,m}_{i;r,n}\,:=\, 
\pmat{\II_{r_{\rm h}} \\ & U^{-i}}
\left(\Pp_{\leq 1}^\top \left[\Xx^{\varepsilon,m}_{i;r,n}\,\Xx_0 \right]^{-1} \Pp_{\leq 1} \right)^{-1}
$$
where $\Xx_0$ is some adequate $r\times r$ matrix such that the Schur complement
$$
 X_0\,:=\, \left(\Pp_{\leq 1}^\top \Xx_0^{-1} \Pp_{\leq 1}\right)^{-1} 
$$
exists. Then, Theorem~\ref{theo-B}~(i) gives a weak limit of stochastic processes 
$$
X^{\varepsilon,m}_{\lfloor tn \rfloor\,;\,r,n}\,\;\;\stackrel{n\to\infty}{\Longrightarrow}\;\;
\pmat{\nul \\ & \Lambda_t^{\varepsilon,m}}\, X_0\qtx{for} t\,>\,0\qtx{with} \Lambda^{\varepsilon,m}_t\,\in\,\CC^{2r_{\rm e} \times 2r_{\rm e}}
$$
being some  stochastic processes with
$\Lambda^{\varepsilon,m}_0\,:=\,\II_{2r_{\rm e}}$
which for $(\varepsilon,m)$ fixed satisfy some SDE (stochastic differential equation) in $t$. A special choice of $\Xx_0$ and hence $X_0$ as in \cite{SV} is needed for proving the limiting eigenvalue statistics mentioned further below. 
The covariance structure of the matrix Brownian motions appearing can be calculated as in \cite[Proposition~5.3 and Section~5.3]{SV}, especially 
\cite[eq. (5.37)]{SV}, as we have almost the same type of random matrices here with the same elliptic and hyperbolic channels and the same diagonalization of $\Delta_r^D$. This gives the following.
\begin{proposition}\label{prop-SDE} Let $\lambda$ be such that $Z$ is chaotic.
The family of processes $\Lambda_t^{\varepsilon,m}$ satisfy SDEs in the evolution in $t$ of the form
$$
d\Lambda_t^{\varepsilon,m}\,=\, \left(\varepsilon+\frac{h_\lambda^3\sigma_\lambda^2-q}{m}\right)\,\Ss\, \pmat{\II_{r_{\rm e}} & \\ & -\II_{r_{\rm e}}}\,\Lambda_t^{\varepsilon,m}\,dt\,+\,\frac{1}{\sqrt{m}} \,\Ss\pmat{d\Aa_t & d\Bb_t \\ -d\Bb_t^* & -d\overline \Aa_t}\,\Lambda^{\varepsilon,m}_t
$$
where
$$
\Ss\,=\,\pmat{(\overline Z-Z)^{-1} \\ & (\overline Z-Z)^{-1}}\;,\quad
q\,=\,\frac{h_\lambda^4\,\sigma_\lambda^2}{r+1}\,\sum_{j=1}^{r_{\rm h}} (\gamma_j^{-1}-\gamma_j)^{-1}\;.\,
$$
$\Aa_t$ and $\Bb_t$ are independent matrix Brownian motions, $\Aa_t$ is Hermitian and $\Bb_t$ complex symmetric, i.e.
$$
\Aa_t^*=\Aa_t\;,\quad \Bb_t^\top\,=\,\Bb_t
$$
with covariance structure
$$
\EE|(\Bb_t)_{ij}|^2=\EE|(\Aa_t)|_{ij}|^2=\EE\big((\Aa_t)_{ii} (\Aa_t)_{jj} \big)\,=\,h_\lambda^4 \sigma_\lambda^2\,t\,\cdot\,\begin{cases}
\frac32   & \text{{\rm if}}\;i=j \\ 1 & \text{{\rm if}}\; i\neq j
\end{cases}
$$
All covariances which do not follow are zero.
\end{proposition}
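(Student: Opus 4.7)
The plan is to invoke Theorem~\ref{theo-B} (the SDE-limit theorem for products of near-identity random matrices) directly to the sequence $\Xx^{\varepsilon,m}_{\lfloor tn\rfloor;r,n}$ using the decomposition \eqref{eq-Tt-parts}, and then to identify the drift and diffusion coefficients by mimicking the computation in \cite[Section~5.3]{SV}, with the translation $1/\sqrt{m}\leftrightarrow \sigma$ and the per-entry variance $h_\lambda^4\sigma^2_\lambda$ from \eqref{eq-Y} playing the role of the unit variance there. Checking the hypotheses of Theorem~\ref{theo-B} is the first step: the deterministic skeleton $\Tt_r$ from \eqref{eq-Tt-channels} is block diagonal with a contracting hyperbolic block $\Gamma$, a unitary elliptic block $U$, and an expanding hyperbolic block $\Gamma^{-1}$; the centered fluctuation $\frac{1}{\sqrt{mn}}\Yy^{\varepsilon,m}_{i;n}$ has mean zero with uniform moment bounds by \eqref{eq-Y}; and the deterministic drift $(\varepsilon/n+W^{\varepsilon,m}_n/(mn))\Ww_r$ has a limit given by \eqref{eq-errors}. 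The Schur-complement construction of $X^{\varepsilon,m}_{i;r,n}$, with the pre-factor $\diag(\II_{r_{\rm h}},U^{-i})$, is exactly the device that decouples the non-trivial elliptic dynamics from the trivially decaying hyperbolic ones in the limit.

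For the diffusion part, I would expand
$$
\Yy^{\varepsilon,m}_{i;n}\;=\;\Qq^{-1}\tilde O^\top \pmat{Y_{i;mn}(\lambda^\varepsilon_n) & \nul \\ \nul & \nul}\tilde O\,\Qq,
$$
entry by entry, expressing its blocks as linear combinations of the i.i.d.\ scalar variables $Y_{i,j;mn}$ weighted by the columns of $O$. The chaotic condition on $Z$ ensures that after conjugation by $\diag(\II_{r_{\rm h}},U^{-i})$ the only time-averaged second moments that survive are precisely those listed: $\EE|(\Aa_t)_{ij}|^2$, $\EE|(\Bb_t)_{ij}|^2$ and $\EE((\Aa_t)_{ii}(\Aa_t)_{jj})$. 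The combinatorial constants $\tfrac32$ on the diagonal and $1$ off-diagonal come from the fact that each elliptic-elliptic interaction adds a $Z,\bar Z$ pair whose absolute square contributes one, while the diagonal entries gain an extra contribution from the ``$\bar z_i\bar z_jz_kz_l=1$ with $\{i,j\}=\{k,l\}$'' coincidence allowed in the chaotic definition. The prefactor $\Ss$ emerges from $\Qq^{-1}$ producing factors $(\bar z_j-z_j)^{-1}$ in the elliptic block.

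For the drift, the explicit $\varepsilon+h_\lambda^3\sigma^2_\lambda/m$ comes from the two deterministic pieces in \eqref{eq-Tt-parts}: the $\varepsilon/n$ term and the $W^{\varepsilon,m}_n/(mn)\to h_\lambda^3\sigma^2_\lambda/m$ term by \eqref{eq-W}. The projection of $\Ww_r$ onto the elliptic block after the Schur complement gives exactly $\Ss\diag(\II_{r_{\rm e}},-\II_{r_{\rm e}})$: the sign pattern reflects the $Z$ vs.\ $Z^*$ split in $U$, and the $(\bar Z-Z)^{-1}$ prefactor again arises from $\Qq^{-1}$. The remaining Itô correction $-q/m$ is the delicate term. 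It comes from noise-induced coupling between the hyperbolic and elliptic blocks: amplitude temporarily leaks through the hyperbolic channels, gets damped by the geometric factor $\sum_{k\geq 0}\gamma_j^{2k}=(1-\gamma_j^2)^{-1}=(\gamma_j^{-1}-\gamma_j)^{-1}\gamma_j^{-1}\cdots$, and returns to the elliptic block. Summing the resulting quadratic variation over the $r_{\rm h}$ hyperbolic channels and averaging the entries of $O$ (producing the factor $1/(r+1)$ as in \cite[eq.~(5.37)]{SV}) yields exactly the stated formula for $q$.

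The main obstacle is this last Itô correction: one must expand $[\Xx^{\varepsilon,m}_{i;r,n}\Xx_0]^{-1}$ to second order in $1/\sqrt{n}$, carefully track the hyperbolic-elliptic cross terms in $\Yy^{\varepsilon,m}_{i;n}$, take expectations using the explicit covariance structure above, and re-sum the geometric series over hyperbolic eigenvalues. The computation is essentially the one carried out in \cite[Section~5.3]{SV} with only cosmetic changes coming from the $m$-dependence of the noise; the key point to verify is that the $\varepsilon$- and $m$-dependence of the fluctuation covariances in \eqref{eq-errors} enters only at subleading order $O(1/(mn),\varepsilon/n)$, which does not affect the limit. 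Once this correction is in hand, identifying the drift and diffusion in the SDE for $\Lambda^{\varepsilon,m}_t$ is immediate, and the stated form of the proposition follows.
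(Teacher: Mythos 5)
Your proposal follows the same route as the paper: invoke Theorem~\ref{theo-B} and transcribe the covariance and drift computation from [SV, Prop.~5.3, Sec.~5.3, eq.~(5.37)], with the fluctuation variance $h_\lambda^4\sigma_\lambda^2$ from \eqref{eq-Y} replacing the unit potential variance used there; the paper's own ``proof'' is precisely this citation, so the argument matches. One small imprecision in your heuristic for the covariance: the factor $\tfrac32$ on the diagonal versus $1$ off the diagonal comes from the sine fourth moments of the columns of $O$, namely $\sum_k O_{ki}^4=\tfrac{3}{2(r+1)}$ while $\sum_k O_{ki}^2O_{kj}^2=\tfrac{1}{r+1}$ for generic $i\neq j$, rather than from the $\{i,j\}=\{k,l\}$ coincidence in the chaotic definition; the chaotic condition on $Z$ only ensures that the other resonant phase products (e.g.\ $z_iz_jz_kz_l=1$, or the $O$-degeneracies at $i+j=r+1$) average out under $U^{-i}$-conjugation --- it selects which covariances survive, but does not supply the numerical constant.
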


Note, the occuring factors $h_\lambda^4\sigma_\lambda^2$  come from the variance of $Y_{i,j;s}$ as compared to the variance $1$ for the potential used in \cite{SV}. 
As in \cite{SV} for energies close to $\lambda$ all the eigenvalues of $H^w_{n,r,s}$ are given by the zeros of $\lambda'\mapsto\det(\pmat{\II_r & \nul} \XX^{w,\lambda'}_{n;t,s} \smat{\II_r \\ \nul})$ (See Proposition~\ref{prop-EV}).
Note that for $|\varepsilon|<C$ and any $C$ there is $n_0=n_0(C,\lambda)$ such that $\lambda^\varepsilon_n \in I_{w,\nu} \subset \RR \setminus [-\sigma,\sigma]$ for any $n>n_0$ 
Using the calculations in \cite[Theorem~5.4]{SV} one can change the analytic function in $\varepsilon$ characterizing the eigenvalues along adequate sub-sequences  to get another characterization of this point process 
in the limit using Theorem~\ref{theo-B}~(iii). This leads to the following.

\begin{proposition}\label{prop-EV-lim}
Let  $\Ee_{n,r,s}$ be the process of eigenvalues of $H_{n,r,s}-\lambda \II_{nrs}$ re-scaled by the factor $n(h_\lambda^2 \sigma_\lambda^2+1)$, i.e. let
$$
\Ee_{n,r,s}\,=\,n(h_\lambda^2 \sigma_\lambda^2+1)\,\spec\left(H_{n,r,s}-\lambda\,\II_{nrs} \right)\,.
$$
Fixing $r$ let $\lambda \in I_{w,\nu}$ be such that $Z$ (as defined in \eqref{eq-Gamma}) is chaotic, let $n_k$ be some  strictly increasing sequence such that $Z^{n_k+1} \to Z_*$ for $k\to \infty$. Then, $\Ee_{n_k,r,mn_k}$ converges to the zero process of the determinant of a $r_{\rm e}(\lambda) \times r_{\rm e}(\lambda)$ matrix,
$$
\Ee_{n_k,r,mn_k}\quad\Longrightarrow\quad {\rm zeros}_\varepsilon \,\det\left(
\pmat{\overline Z_* & Z_*}\,\Lambda_1^{\varepsilon,m}\,\pmat{\II_{r_{\rm h}} \\ -\II_{r_{\rm h}}}\,\right)\qtx{for}{k\to\infty}
$$
\end{proposition}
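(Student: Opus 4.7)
The plan is to apply Proposition~\ref{prop-EV} to characterize the eigenvalues as the zero set in $\varepsilon$ of an analytic determinant, rewrite that determinant using the $\tilde O\Qq$ basis change, and then pass to the limit via Proposition~\ref{prop-SDE} along the subsequence on which $Z^{n_k+1}\to Z_*$. The overall structure parallels \cite[Thm.~5.4]{SV}.

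First, since $\lambda^\varepsilon_n\in I_{w,\nu}\subset\RR\setminus[-\sigma,\sigma]$ for $|\varepsilon|$ bounded and $n$ large, Proposition~\ref{prop-EV} yields that $\lambda^\varepsilon_n\in\spec(H^w_{n,r,mn})$ if and only if
$$D_n(\varepsilon)\,:=\,\det\Bigl(\pmat{\II_r & \nul}\;\XX^{w,\lambda^\varepsilon_n}_{n;r,mn}\;\pmat{\II_r \\ \nul}\Bigr)\,=\,0.$$
Substituting $\XX^{w,\lambda^\varepsilon_n}_{n;r,mn}=\tilde O\,\Qq\,\Xx^{\varepsilon,m}_{n;r,n}\,\Qq^{-1}\,\tilde O^\top$ and using the block-orthogonal form of $\tilde O$, I would reduce $D_n(\varepsilon)$ up to a nonzero multiplicative constant to $\det\bigl(J\,\Xx^{\varepsilon,m}_{n;r,n}\,K\bigr)$, with $J:=\pmat{\II_r&\nul}\Qq$ and $K:=\Qq^{-1}\pmat{\II_r\\\nul}$ encoding the boundary condition $\vec u_0=\vec 0=\vec u_{n+1}$ in the channel basis. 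Hence $\Ee_{n,r,mn}$ is the zero set in $\varepsilon$ of this random entire function.

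Next I would isolate the elliptic block. Because $\Tt_r=\diag(\Gamma,U,\Gamma^{-1})$ with $\|\Gamma\|<1$, the hyperbolic block of $\Xx^{\varepsilon,m}_{n;r,n}$ splits into geometrically contracting and expanding subspaces. Choosing the auxiliary matrix $\Xx_0$ as in \cite{SV} to align with the non-contracting subspace, the Schur-complement rescaling in the definition of $X^{\varepsilon,m}_{i;r,n}$ both suppresses the contracting modes (whose contribution to $J\,\Xx\,K$ decays like $\|\Gamma\|^n$, beating the polynomial growth afforded by the moment bounds~\eqref{eq-Y}) and strips off the rapid phase $U^n$ of the elliptic block. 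Proposition~\ref{prop-SDE} combined with Theorem~\ref{theo-B} then gives $X^{\varepsilon,m}_{n;r,n}\Rightarrow\diag(\nul,\Lambda^{\varepsilon,m}_1)\,X_0$ jointly in $\varepsilon$ on compacts. Reinstating $U^{n_k+1}\to\diag(\overline{Z_*},Z_*)$ along the chaotic subsequence transforms the elliptic boundary data in $J$ and $K$ into the two factors appearing in the statement, with the antisymmetric combination on the right arising from enforcing $\vec u_0=\vec 0$ through $K$.

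Put together, these steps yield uniform convergence on compacts (after Skorokhod embedding) of the random entire functions $D_{n_k}(\varepsilon)$ to $D_\infty(\varepsilon):=\det\bigl(\pmat{\overline Z_* & Z_*}\,\Lambda_1^{\varepsilon,m}\,\smat{\II\\-\II}\bigr)$. The main obstacle is the final passage from function convergence to weak convergence of the associated zero point processes, since $\Ee_{n,r,mn}$ is precisely the zero process of $D_n$. The standard approach, as in \cite[\S5]{SV}, combines Hurwitz's theorem applied pathwise with the fact that $D_\infty\not\equiv 0$ almost surely, the latter using chaoticity of $Z$ to prevent degenerate cancellations in the SDE of Proposition~\ref{prop-SDE}. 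Convergence of the Laplace functionals of the point processes then follows from dominated convergence applied to contour integrals of $D'_{n_k}/D_{n_k}$, with uniform bounds provided by~\eqref{eq-Y}; this is the content of Theorem~\ref{theo-B}~(iii) applied in the present setting.
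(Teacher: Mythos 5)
Your proposal follows essentially the same route as the paper, which itself only gives a brief sketch pointing to the calculations in \cite[Theorem~5.4]{SV} and invoking Theorem~\ref{theo-B}~(iii). You correctly identify the three key ingredients: (1) Proposition~\ref{prop-EV} turns the eigenvalues of $H^w_{n,r,mn}-\lambda^\varepsilon_n\II$ into the zero set of the analytic function $D_n(\varepsilon)=\det(\pmat{\II_r&\nul}\XX^{w,\lambda^\varepsilon_n}_{n;r,mn}\smat{\II_r\\\nul})$; (2) conjugating by $\tilde O\Qq$ (where $\det(O)\det(O^\top)=1$ makes the $\tilde O$ contribution drop out) and performing the Schur-complement rescaling that defines $X^{\varepsilon,m}_{n;r,n}$ reduces $D_n$ to a function $f_n(X^{\varepsilon,m}_{n;r,n})$ whose boundary matrices still carry the phase $U^{n+1}$, which only converges along the subsequence $n_k$ with $Z^{n_k+1}\to Z_*$; and (3) Theorem~\ref{theo-B}~(iii), whose proof rests on the analytic versions from part~(ii), Hurwitz's theorem, and the a.s.\ non-triviality of the limit function, upgrades the uniform-on-compacts convergence of the $D_{n_k}$ to weak convergence of the zero point processes. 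One small mischaracterization: the Schur complement $(\Pp_{\leq1}^\top[\Xx\Xx_0]^{-1}\Pp_{\leq1})^{-1}$ projects away the \emph{expanding} $\Gamma^{-1}$ modes; it is the surviving \emph{contracting} $\Gamma$-block that goes to $\nul$ in the limit $X^{\varepsilon,m}_{\lfloor tn\rfloor;r,n}\Rightarrow\diag(\nul,\Lambda^{\varepsilon,m}_t)X_0$, rather than being killed by the Schur complement directly. This does not affect the validity of your argument.
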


The important part here is that the SDEs can be jointly solved in $\varepsilon$ with unique analytic versions in $\varepsilon$ (distributions on the set of analytic functions, see Theorem~\ref{theo-B}~(ii)\,). 
Therefore, the random set of zeros, ${\rm zeros}_\varepsilon f(\Lambda^{\varepsilon,m}_1)=\{\varepsilon\in \CC\,:\,f(\Lambda^{\varepsilon,m}_1)=0\}$ for an analytic function $f$ is well defined (as a distribution on the set of sets) 
and makes sense as a point process if $\PP(f(\Lambda^{\varepsilon,m}_1)\equiv 0\,\forall \varepsilon \in \CC)=0$.

The factor $(h_\lambda^2\sigma_\lambda^2+1)$ occurs here because it also occurs in the perturbations $\lambda^\varepsilon_n$ of $\lambda$.
Note that with fixing $r$ and letting $s\sim n$ going to infinity of the same order we basically look at a sequence of graphs resembling a quasi-two-dimensional limit.

\section{The GOE limit}

Let us now explain how from Propositions~\ref{prop-SDE} and \ref{prop-EV-lim} one can get to the limiting GOE statistics as in \cite{SV,VV}. Formally, the first step is like a derivative of the SDE in Proposition~\ref{prop-SDE} for small $1/\sqrt{m}$ when replacing $\varepsilon$ by $\varepsilon/\sqrt{m}$ meaning that we zoom in more locally. Then in the $m\to \infty$ lots of 
(groups) of eigenvalues of this process will move to infinity and some group is left which spaces like the eigenvalues of a random matrix with Gaussian entries. 
These random matrices are almost like in the GOE ensemble, there is just a bit of a different covariance structure and some dependence.
Afterwards, the $r\to \infty$ limit will finally lead to the ${\rm Sine}_1$ process. So all together with the limit in the previous structure, it is a triple limit process leading to the GOE statistics.

Fixing $r$ we look at the process $\sqrt{m}\,(X^{\varepsilon/\sqrt{m},m}_{i;r,n}- X_0)$ in a $m\to\infty$ limit.
On the level of the limiting process $\Lambda^{\varepsilon,m}_t$ as in Proposition~\ref{prop-SDE} let us note that
$$
\widehat \Lambda^{\varepsilon,m}_t\,:=\,\sqrt{m}\,\left(\Lambda^{\varepsilon/\sqrt{m},m}_t\,-\,\II_{2r_{\rm e}} \right)
$$
satisfies the SDE
\begin{align*}
d\widehat \Lambda^{\varepsilon,m}_t\,=\,& \left(\varepsilon-\frac{h_\lambda^2\sigma_\lambda^2-q}{\sqrt{m}}\right)\,\Ss\,
\pmat{\II_{r_{\rm e}} \\ & -\II_{r_{\rm e}}}\,\left(\frac{\widehat \Lambda^{\varepsilon,m}_t}{\sqrt{m}} \,+\,\II_{2r_{\rm e}}\right)\,dt \\
&+\, \Ss\,\pmat{d\Aa_t & d \Bb_t \\ -d\Bb_t^* & -d\overline \Aa_t}\,\left(\frac{\widehat \Lambda^{\varepsilon,m}_t}{\sqrt{m}} \,+\,\II_{2r_{\rm e}}\right)\;\qtx{with} \widehat \Lambda^{\varepsilon,m}_0\,=\,\nul\;.
\end{align*}
In the limit $m\to \infty$ the SDE can be easily solved and one finds as in \cite{SV}
$$
\widehat \Lambda^{\varepsilon,m}_t\;\;\stackrel{m\to\infty}{\Longrightarrow}\;\; \Lambda^\varepsilon_t\,:=\,\varepsilon\,t\,\Ss\,\pmat{\II_{r_{\rm e}} \\ & -\II_{r_{\rm e}}}\,+\,\Ss\,\pmat{\Aa_t & \Bb_t \\ -\Bb^*_t & \overline \Aa_t}\;.
$$
Now taking $\lambda$  such that $Z$ is chaotic as in Proposition~\ref{prop-EV-lim} and taking a sequence $n_k$ with $Z^{n_k+1} \to \II_{r_{\rm e}}$  we find the limiting eigenvalue processes 
$$
\Ee_{n_k,r,mn_k}\;\;\stackrel{k\to\infty}\Longrightarrow\;\; \Ee_{r,m}\,:=\,{\rm zeros_\varepsilon}\,\det\left(\pmat{\II_{r_{\rm e}} & \II_{r_{\rm e}}}\,\Lambda^{\varepsilon,m}_1\,\pmat{\II_{r_{\rm e}} \\ - \II_{r_{\rm e}}} \right)\;.
$$
Then, working with analytic versions in $\varepsilon$ and $1/\sqrt{m}$ for this family of processes one finds as in \cite{SV}
$$
\sqrt{m}\,\Ee_{r,m}\;\;\stackrel{m\to\infty}{\Longrightarrow}\;\; \Ee_r\;:=\;{\rm zeros}_\varepsilon\,\det\left(\pmat{\II_{r_{\rm e}} & \II_{r_{\rm e}}}\,\Lambda^{\varepsilon}_1\,\pmat{\II_{r_{\rm e}} \\ - \II_{r_{\rm e}}}  \right)\;=\;\spec\,\Re{\rm e}(\Bb_1-\Aa_1)
$$
Using the calculations as in \cite[Lemma 5.5]{SV} in combination with Theorem~\ref{theo-B}~(iii) one can get to this limit with a double sequence $n_k\gg m_k\to \infty$, more precisely:
\begin{proposition}
\label{prop-EV-lim2}
Let $\lambda$ be such that $Z$ is chaotic, let $n_k$ be a strictly  increasing sequence of natural numbers such that $Z^{n_k+1} \to \II_{r_{\rm e}}$ and let $m_k\to \infty$ be some increasing sequence towards infinity such that $\sqrt{m_k} \,\| Z^{n_k+1}-\II_{r_{\rm e}}\|\,\to\,0$. Then for $t>0$, jointly in $t\in(0,1]$ and $\varepsilon$ varying in any finite subset of $\CC$ we find
$$
\sqrt{m_k}\,\left(X^{\frac\varepsilon{\sqrt{m_k}}, m_k}_{\lfloor tn_k\rfloor;r,n_k}\,-\,X_0 \right)\quad\stackrel{k\to\infty}{\Longrightarrow}\quad \pmat{\nul \\ &  \Lambda^\varepsilon_t} \,X_0\;.
$$
Moreover, for the re-scaled eigenvalue process $\Ee_{n,r,s}$ as defined above we find
$$
\sqrt{m_k}\,\Ee_{n_k,r,m_k n_k}\;\;\stackrel{k\to\infty}{\Longrightarrow}\;\; \Ee_r\,=\,\spec \Re{\rm e}(\Bb_1-\Aa_1)\;.
$$
\end{proposition}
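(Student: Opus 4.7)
The plan is to combine the $n \to \infty$ convergence available at each fixed $m$ from Proposition~\ref{prop-SDE}, the $m \to \infty$ degeneracy of the rescaled SDE already written out in the preamble, and the analytic-in-$\varepsilon$ joint-limit machinery supplied by Theorem~\ref{theo-B}~(iii). First I would apply Proposition~\ref{prop-SDE} with $\varepsilon$ there replaced by $\varepsilon/\sqrt{m}$: at each fixed $m$, this yields the weak limit $X^{\varepsilon/\sqrt{m},m}_{\lfloor tn\rfloor;r,n} \Rightarrow \pmat{\nul \\ & \Lambda^{\varepsilon/\sqrt{m},m}_t}\,X_0$ as $n\to\infty$, where $\Lambda^{\varepsilon/\sqrt{m},m}_t$ is the $2r_{\rm e}\times 2r_{\rm e}$ process solving the SDE of Proposition~\ref{prop-SDE} with parameter $\varepsilon/\sqrt{m}$.

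Setting $\widehat\Lambda^{\varepsilon,m}_t := \sqrt{m}(\Lambda^{\varepsilon/\sqrt{m},m}_t - \II_{2r_{\rm e}})$ one arrives at the explicit rescaled SDE displayed in the preamble, whose $1/\sqrt{m}$ drift correction and quadratic-in-$\widehat\Lambda/\sqrt{m}$ multiplicative piece both vanish as $m \to \infty$. Standard SDE stability then yields $\widehat\Lambda^{\varepsilon,m}_t \Rightarrow \Lambda^\varepsilon_t$, jointly in any finite collection of $(t,\varepsilon)$, with the explicit Gaussian limit $\Lambda^\varepsilon_t = \varepsilon t\,\Ss\,\pmat{\II_{r_{\rm e}} & \\ & -\II_{r_{\rm e}}} + \Ss\,\pmat{\Aa_t & \Bb_t \\ -\Bb_t^* & -\overline\Aa_t}$. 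To execute both passages simultaneously along the diagonal subsequence $(n_k,m_k)$ and obtain the first displayed convergence of the proposition, I would invoke Theorem~\ref{theo-B}~(iii); its role here is precisely to certify that the discretization error in the $n$-step is of smaller order than $1/\sqrt{m_k}$, so the two rescalings commute.

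For the eigenvalue statement, Proposition~\ref{prop-EV-lim} applied with perturbation $\varepsilon/\sqrt{m_k}$ represents $\sqrt{m_k}\,\Ee_{n_k,r,m_k n_k}$ as the zero set in $\varepsilon$ of $\det\bigl(\pmat{\overline Z^{n_k+1} & Z^{n_k+1}}\,\Lambda^{\varepsilon/\sqrt{m_k},m_k}_1\,\pmat{\II_{r_{\rm e}} \\ -\II_{r_{\rm e}}}\bigr)$. Substituting $\Lambda^{\varepsilon/\sqrt{m_k},m_k}_1 = \II_{2r_{\rm e}} + m_k^{-1/2}\widehat\Lambda^{\varepsilon,m_k}_1$, the constant contribution is $\overline Z^{n_k+1} - Z^{n_k+1}$, whose norm is $O(\|Z^{n_k+1}-\II_{r_{\rm e}}\|)$. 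The balance condition $\sqrt{m_k}\|Z^{n_k+1}-\II_{r_{\rm e}}\| \to 0$ is calibrated exactly so that, after multiplying the determinant by the appropriate power of $\sqrt{m_k}$, this leading piece is dominated by the $1/\sqrt{m_k}$ stochastic correction, leaving the limiting analytic function $\varepsilon \mapsto \det\bigl(\pmat{\II_{r_{\rm e}} & \II_{r_{\rm e}}}\,\Lambda^\varepsilon_1\,\pmat{\II_{r_{\rm e}} \\ -\II_{r_{\rm e}}}\bigr)$. A direct algebraic computation using $\Ss = \diag((\overline Z-Z)^{-1},(\overline Z-Z)^{-1})$, the Hermiticity of $\Aa_1$, and the complex-symmetry of $\Bb_1$ (so that $\Bb_1^* = \overline{\Bb_1}$) shows that this inner matrix equals $2(\overline Z-Z)^{-1}\bigl(\varepsilon\II_{r_{\rm e}} - \Re(\Bb_1 - \Aa_1)\bigr)$; since the diagonal prefactor is invertible, the zeros in $\varepsilon$ are exactly $\spec\,\Re(\Bb_1 - \Aa_1)$. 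The main obstacle is the joint diagonal limit: one must ensure that the two passages produce convergence of the analytic-in-$\varepsilon$ versions on compact subsets of $\CC$ rather than merely pointwise, so that the zero-counting point processes converge as point processes. This is precisely where Theorem~\ref{theo-B}~(iii) together with the tight balance $\sqrt{m_k}\|Z^{n_k+1}-\II_{r_{\rm e}}\|\to 0$ do the real work, by guaranteeing that the limiting analytic function is non-degenerate (not identically zero in $\varepsilon$) and arises as a genuine limit of the pre-limit analytic functions with their zero sets.
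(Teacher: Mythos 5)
Your proposal follows essentially the same route as the paper: apply Proposition~\ref{prop-SDE} with $\varepsilon\mapsto\varepsilon/\sqrt{m}$, rescale to $\widehat\Lambda^{\varepsilon,m}_t=\sqrt{m}(\Lambda^{\varepsilon/\sqrt{m},m}_t-\II_{2r_{\rm e}})$ and pass $m\to\infty$ in the resulting SDE, then combine the two passages into a diagonal sequence $(n_k,m_k)$ via Theorem~\ref{theo-B}~(iii), with the calibration $\sqrt{m_k}\|Z^{n_k+1}-\II_{r_{\rm e}}\|\to 0$ ensuring the $Z^{n_k+1}\neq\II$ error vanishes at the $\sqrt{m_k}$ scale. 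Your explicit algebraic verification that $\pmat{\II & \II}\Lambda^\varepsilon_1\smat{\II \\ -\II}=2(\overline Z-Z)^{-1}\bigl(\varepsilon\II_{r_{\rm e}}-\Re(\Bb_1-\Aa_1)\bigr)$ is a welcome addition the paper only cites from \cite{SV} (and in fact confirms that the sign $-\overline\Aa_t$ in the lower-right block of $\Lambda^\varepsilon_t$ — which you use, consistent with integrating the SDE — is the correct one, pointing to a sign slip in the paper's display). One small caveat: your first gloss on Theorem~\ref{theo-B}~(iii) — that it ``certifies the discretization error in the $n$-step is of smaller order than $1/\sqrt{m_k}$'' — is not quite what that theorem does; its actual job, which your final paragraph describes accurately, is to pass from uniform-on-compacts convergence of the analytic-in-$\varepsilon$ versions (via the \cite{VV} machinery) to weak convergence of the zero-set point processes; the quantitative control needed for the diagonal argument really comes from the analytic version/rate estimates invoked alongside (the paper's reference to \cite[Lemma~5.5]{SV}), not from Theorem~\ref{theo-B}~(iii) itself.
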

Let us note that from the process it is obvious that given any (slowly) towards $\infty$ increasing function $f(n)$ one can choose to consider only sequences such that $m_k<f(n_k)$.

\begin{proof}[Proof of Theorem~\ref{th-main}]
Let $b$ be some standard Gaussian variable and $K=K(r_{\rm e})$ be an independent real symmetric $r_{\rm e} \times r_{\rm e}$ matrix with Gaussian entries such that $\EE((K_{ii})^2)=\frac{5}{4}$ and $\EE((K_{ij}^2)=1$ for $i\neq j$.
Then in distribution,
$$
\Re{\rm e}(\Bb_1-\Aa_1)\;
\stackrel{d}{=}\;\frac{h_\lambda^2 \sigma_\lambda}{\sqrt{r+1}}\,\left( K\,+\, b\,\II_{r_{\rm e}}\right)
$$
As explained in \cite{VV}, using methods of \cite{ESYY} the local eigenvalue process converges to the ${\rm Sine}_1$ process for $r_{\rm e}\to \infty$, more precisely,
$$
\sqrt{r_{\rm e}}\,\spec(K(r_{\rm e})\,+\,b\,\II_{r_{\rm e}})\;\;\stackrel{r_{\rm e}\to\infty}{\Longrightarrow}\;\; {\rm Sine}_1\;.
$$
Now, for almost all $\lambda\in I_{w,\nu}$ i.e. almost all $E(\lambda)\in(-4,4)$ we find that for all $r\in\NN$, $Z$ is chaotic and there is no parabolic channel. Let us fix such a $\lambda$. 
Then for $r\to\infty$ we also find $r_{\rm e}(r,E)\to \infty$  and hence
$$
\frac{\sqrt{(r+1)r_{\rm e}}}{h_\lambda^2 \sigma_\lambda}\,\Ee_r\;\;\Longrightarrow\;\;{\rm Sine}_1\;.
$$
This convergence and the convergence mentioned in Proposition~\ref{prop-EV-lim2}
happen in the topology of weak convergence for point processes.
Therefore, one can construct some diagonal sequence $m_k,n_k,r_k\to\infty$ such that with $s_k=m_k n_k$ and $r_{{\rm e},k}=r_{\rm e}(r_k,E)$ we find
$$
\frac{\sqrt{m_k(r_k+1)\,r_{e,k}}}{h_\lambda^2 \sigma_\lambda}\;\Ee_{n_k,r_k,s_k}\;\;\stackrel{k\to\infty}{\Longrightarrow}\;\; {\rm Sine}_1\;.
$$
This proves Theorem~\ref{th-main} with the normalization constant
$$
\Nn_k\,:=\,\frac{(h_\lambda^2\sigma_\lambda^2+1)\,\sqrt{n_k s_k (r_k+1)r_{{\rm e},k}}}{h_\lambda^2 \sigma_\lambda}\;.
$$
Now let $f(n)$ be any (slowly) increasing function with $f(n)\to\infty$ for $n\to\infty$.
In Proposition~\ref{prop-EV-lim2} one may choose $m_k<f(n_k)$ and start the sequence with $n_k>f(r)$ . Therefore, we may choose $m_k<f(n_k)$ and $r_k<f(n_k)$.
\end{proof}

\appendix

\section{Harmonic means of random variables}

In the transfer matrices we see effective potentials that are harmonic means of certain independent identically distributed (iid) random variables. Certain estimates are crucial for the proofs.
We therefore consider in this section independent identically distributed random variables $X_k\in [a,b]$, $0<a<b$, $k\in \NN$.
These variables correspond to $E-v_{i,j,k}$.
We will consider the harmonic means $V_s$ and the harmonic average $h$ defined by
$$
V_s \,:=\, \frac1{\frac1s \sum_{k=1}^n \frac1{X_k}}\;,\quad
h\,:=\,\frac1{\EE(1/X_k)}\;
$$
where $\EE$ denotes the expectation value.
$V_s$ corresponds to the random variables $v^\lambda_{i,j;s}$ as in \eqref{eq-v-fin} and $h$ corresponds to $h_\lambda$.
The second and third moment of the centered random variable $1/X_j - 1/h$ will be of some importance, therefore let
$$
\sigma_m^m:=\EE((1/X_k - 1/h)^m)\;.
$$
Note $\sigma_1=0$ and $\sigma_2^2$ is the variance of $1/X_k$ and corresponds to $\sigma^2_\lambda$ in the application of the following estimates.
\begin{theo} \label{th-harmonic}
There exists a continuous function $C=C(a,b,h,\sigma_2,\sigma_3)$ such that uniformly in $s$,
\begin{equation}\label{eq-EE-M-1}
 0< \EE(V_s-h) \,\leq\, \frac{b\,h^2\, \sigma_2^2}{s}\;,\qquad
 \left| \EE(V_s-h)- \frac{h^3 \sigma_2^2}{s}\right|\,\leq\,\frac{C}{s^2}\;.
\end{equation}
\begin{equation}\label{eq-EE-M-2}
\frac{a^2\,h^2\,\sigma_2}{s}\,\leq\,\EE((V_s-h)^2)\,\leq\,\frac{b^2\,h^2\,\sigma_2^2}{s}
\,,\qquad
\left|\,\EE((V_s-h)^2)-\frac{h^4 \sigma_2^2}{s}\,\right|\,\leq\, \frac{C}{s^2}\;.\qquad
 \end{equation}
Moreover, for the higher moments we find
\begin{equation}\label{eq-EE-M-3}
 \left|\,\EE((V_s-h)^3)\,\right|\,\leq\, \frac{C}{s^2}\,,\qtx{and}
 \EE((V_s-h)^{2m})\,\leq\,\frac{(2m)!\,h^{2m}\,b^{2m} }{2^m \,m!\,a^{2m}}\,\frac{1}{s^m}\quad\text{for $m\geq 2$}\,.
\end{equation}
\end{theo}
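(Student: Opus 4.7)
The strategy is to rewrite the harmonic mean as an algebraic expression in a centered empirical mean, and then iterate a single key identity. Set $Y_k := 1/X_k - 1/h$, so the $Y_k$ are iid with mean zero, variance $\sigma_2^2$, and $|Y_k| \leq 1/a$. Writing $Z_s := \frac{1}{s}\sum_{k=1}^s Y_k$, one has $V_s^{-1} = 1/h + Z_s$ and hence $V_s = h/(1+hZ_s)$. Multiplying across gives the identity
\[
V_s - h \;=\; -h\,V_s\,Z_s\,,
\]
which will be iterated throughout. Note also $V_s \in [a,b]$ since $V_s$ is the harmonic mean of numbers in $[a,b]$.

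For the first moment I iterate the identity once: taking expectations and using $\EE(Z_s) = 0$ together with $V_s Z_s = hZ_s - hV_s Z_s^2$ yields the exact formula $\EE(V_s - h) = h^2\EE(V_s Z_s^2)$. Sandwiching $V_s \in [a,b]$ and using $\EE(Z_s^2) = \sigma_2^2/s$ by independence gives immediately $0 < \EE(V_s - h) \leq bh^2\sigma_2^2/s$. For the sharp asymptotic $h^3\sigma_2^2/s + O(1/s^2)$ I split $V_s = h + (V_s - h)$ inside $\EE(V_s Z_s^2)$; the leading piece is $h^3\sigma_2^2/s$, and the remainder $h^2\EE((V_s-h)Z_s^2) = -h^3\EE(V_s Z_s^3)$ is controlled by splitting once more: the main contribution $h^4\EE(Z_s^3) = h^4\sigma_3^3/s^2$ comes from the only diagonal term surviving in the cube of a sum of iid centered variables, while the cross term is bounded by Cauchy--Schwarz using $|V_s - h| \leq bh|Z_s|$ and the elementary bound $\EE(Z_s^6) = O(1/s^3)$.

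The second and third moments are handled by the same scheme after squaring or cubing the identity. Namely, $\EE((V_s-h)^2) = h^2\EE(V_s^2 Z_s^2)$ gives the two-sided bound by sandwiching $V_s^2 \in [a^2,b^2]$, and the sharp asymptotic $h^4\sigma_2^2/s + O(1/s^2)$ follows from the expansion $V_s^2 = h^2 + 2h(V_s - h) + (V_s - h)^2$, the first moment estimate already proved, and the Cauchy--Schwarz bound $\EE((V_s-h)^2 Z_s^2) \leq (\EE(V_s-h)^4)^{1/2}(\EE Z_s^4)^{1/2} = O(1/s^2)$, where $\EE(V_s-h)^4 \leq b^4 h^4 \EE(Z_s^4) = O(1/s^2)$. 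Similarly $\EE((V_s-h)^3) = -h^3\EE(V_s^3 Z_s^3)$: the leading piece $h^6\sigma_3^3/s^2$ and each remainder arising from expanding $V_s^3$ around $h^3$ are $O(1/s^2)$. Finally, for even moments of order $2m \geq 4$, the pointwise bound $|V_s - h| = hV_s|Z_s| \leq bh|Z_s|$ reduces matters to the standard combinatorial estimate $\EE(Z_s^{2m}) \leq \frac{(2m)!}{2^m m!}\,\sigma_2^{2m}/s^m$ for empirical means of iid centered bounded variables (with $\sigma_2 \leq 1/a$), which delivers exactly the stated constant.

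The main obstacle is purely the bookkeeping of error orders: the sharp remainder $O(1/s^2)$ in the first moment requires estimating $\EE(V_s Z_s^3)$ via the independence cancellation $\EE(Z_s^3) = \sigma_3^3/s^2$ rather than the weaker $\EE|Z_s|^3 = O(s^{-3/2})$ one would get from Jensen alone, and analogous care is needed in the second and third moment expansions to avoid losing a half power of $s$ at cross terms. Once the identity $V_s = h - hV_s Z_s$ is used systematically, every remaining estimate reduces to a Cauchy--Schwarz pairing of the explicit bound $|V_s - h| \leq bh|Z_s|$ against standard iid moment inequalities, with all dependence on $a$, $b$, $h$, $\sigma_2$, $\sigma_3$ appearing as explicit polynomial factors.
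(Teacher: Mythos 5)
Your approach is essentially the same as the paper's: the central identity $V_s - h = -hV_sZ_s$ (the paper writes $Y$ for your $Z_s$), iterated expansion, the bound $V_s\in[a,b]$, and the pairing estimate for $\EE(Z_s^{2m})$. The first three moments are handled correctly — your exact formula $\EE(V_s-h)=h^2\EE(V_sZ_s^2)$ is a cleaner packaging of the paper's expansion $V_s-h=-h^2Y+h^2Y^2V_s$, and both yield the same bounds.

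There is, however, a genuine gap in your treatment of the general even moment. You invoke ``the standard combinatorial estimate $\EE(Z_s^{2m})\leq \frac{(2m)!}{2^m m!}\,\sigma_2^{2m}/s^m$ for empirical means of iid centered bounded variables.'' This inequality is \emph{false} in general: even for $m=2$ it reduces to $\EE(Y_k^4)\leq 3\sigma_2^4$, i.e.\ kurtosis at most $3$, and a bounded centered variable can easily violate this (e.g.\ a symmetric variable taking values $\pm M$ with small probability $p$ and $0$ otherwise has $\EE Y^4/\sigma_2^4 = 1/p$, which one can realize within the class $Y_k = 1/X_k - 1/h$, $X_k\in[a,b]$). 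The inequality is an identity for Gaussians but not a bound for general bounded iid variables. The parenthetical ``with $\sigma_2\leq 1/a$'' does not repair this, since it weakens the (incorrect) bound rather than proving it. The paper sidesteps this by not introducing $\sigma_2$ at all in the general moment: after reducing to the pairing count, each surviving term $\EE(Y_{k_1}^2\cdots Y_{k_m}^2)$ is bounded pointwise by $a^{-2m}$ using $|Y_k|\leq 1/a$, giving $\EE(Z_s^{2m})\leq \frac{(2m)!}{2^m m!}\,a^{-2m}/s^m$ directly, which combined with $|V_s-h|\leq bh|Z_s|$ yields the stated constant. Your proof is repaired by the same replacement of $\sigma_2^{2m}$ by $a^{-2m}$ in the combinatorial estimate.
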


\begin{proof}
Using $b>a>0$ we find $V_s \in [a,b]$ and $V_s\leq\frac1s \sum_{k=1} X_k$ as well as $h<\EE(X_k)$ by the arithmetic-harmonic mean inequality.
 Let $Y=1/V_s-1/h=\frac1s \sum_{k=1}^s (\frac{1}{X_k} - 1/h)$, then  $\EE(Y)=0$,\,
 $\EE(Y^2)=\sigma_2^2 / s$ and $\EE(Y^3)=\sigma_3^3/s^2$. 
 Moreover, with $Y_k:= 1/X_k-1/h$ we find $|Y_k|\leq \frac1a-\frac1b\leq \frac1a$ and
 $$
 \EE(Y^{2m})\,=\,\frac{1}{s^{2m}} \sum_{k_1,\ldots,k_{2m}=1}^s \EE(Y_{k_1}\cdots Y_{i_{km}})
\;\leq\; \frac{1}{s^{2m}} \frac{(2m)!}{2^m m!} \sum_{k_1,\ldots,k_m=1}^s \EE(Y_{k_1}^2 \ldots Y_{k_m}^2))\;
 $$
where we used that unpaired indices lead to zero expectation and the fact that $\frac{(2m)!}{2^mm!}$ is the number of pairings of the set $\{1,\ldots,2m\}$. Now using that there are $s^m$ $m$-tuples $(k_1,\ldots,k_m)$ and using the bound of $Y_k$ as mentioned above we find
$$
\EE(Y^{2m})\,\leq\, \frac{1}{s^{m}} \frac{(2m)!}{2^m m!}\,\frac{1}{a^{2m}} \;.
$$
Expanding $V_s=h-hYV_s$ repeatedly we obtain
\begin{equation}\label{eq-EE-M}
 V_s-h = -hYV_s = -h^2Y+h^2Y^2V_s=-h^2Y+h^3Y^2-h^3Y^3V_s\;.
\end{equation}
As $V_s\in[a,b]$ we can estimate $\EE(h^2Y^2 V_s)\leq h^2 b\,\sigma_2^2/s$ and
$$
\left| \EE(Y^3V_s)\right|\,\leq\, \left|\EE(hY^3)\right|+\left|\EE(hY^4V_s) \right|\,\leq\,
\frac{h(|\sigma_3^3|\,+\,3ba^{-4})}{s^2}
$$
which with \eqref{eq-EE-M} (using the second-last and last term) gives \eqref{eq-EE-M-1}.
Taking powers of \eqref{eq-EE-M} and using similar estimates lead to \eqref{eq-EE-M-2} and 
\eqref{eq-EE-M-3}.

For the general moment bound we use $V_s-h=h^2Y(YV_s-1)$ from the expansion above. 
Since \mbox{$1-YV_s$}$=V_s/h \in [a/h,b/h]$ we have$|1-YV_s|\leq b/h$ and therefore,
$$
\EE((V_s-h)^{2m})\,\leq\, (b/h)^{2m}\, \EE((h^2 Y)^{2m})\,=\, h^{2m}\, b^{2m}\, \EE(Y^{2m})\;\leq\; \frac{1}{s^{m}} \frac{(2m)!}{2^m m!}\,\frac{h^{2m}b^{2m}}{a^{2m}}
$$
\end{proof}

When varying the spectral parameter we also need to understand how the harmonic average varies for the definition in \eqref{eq-def-l-eps}. 
This amounts to replacing $X_k$ by $X_{k,\varepsilon}=X_k+\varepsilon$ and recalculating $h_\varepsilon=1/\EE(X_{k,
\varepsilon}^{-1})$.
Note by the continuity of $C=C(a,b,h,\sigma_2,\sigma_3)$ for the formulas above the error terms will also be uniform in $\varepsilon$ along compact sets $|\varepsilon|\leq c$ in $\varepsilon$ as long as $c<a$ because $X_{k,\varepsilon}\in[a-c,b+c]$ under such perturbations. Using
$$
\frac{1}{X_{k,\varepsilon}}\,=\,\frac{1}{X_k+\varepsilon}\,=\,\frac{1}{X_k}\,-\,\frac{\varepsilon}{X_k^2}\,+\,\frac{\varepsilon^2}{X_k^2 X_{k,\varepsilon}}
$$
as well as $\EE(1/X_k^2)\,=\,\sigma_2^2+1/h^2$ and defining $C_\varepsilon:=\,\EE(\frac1{X_k^2 X_{k,\varepsilon}})$ we find
\begin{equation}\label{eq-h-pert}
h_\varepsilon\,=\, \frac{1}{\frac1h\,-\,\varepsilon[\sigma_2^2+\frac1{h^2}-C_\varepsilon \varepsilon]}\,=\,
h\,+\,\varepsilon\,(\sigma^2_2\,h^2\,+\,1)\,+\,\Oo(\varepsilon^2)
\end{equation}
where the error bound is uniform on compact sets in $|\varepsilon|\leq c$ where $a-c>0$.

\section{SDE limits for products of random matrices}

In this appendix we sumerize the key results of \cite{SV} which are used in this paper.
Let be given some probability space $(\Omega,\Aa,\PP)$, an open ball of radius $r$ around zero $B_r=\{z\in\CC\,:\,|z|<r\}$ and a family of analytic random matrices
$T^\varepsilon_{k;n}:\Omega \to \CC^{r \times r}$ for $k,n \in \NN$, $\varepsilon/n \in B_r$ of the form
$$
T^\varepsilon_{k;n}\,=\,T_0\,+\, \frac{1}{\sqrt{n}} \Vv_{k;n}\,+\, \frac{1}{n} \left( \varepsilon \Yy_n+ \Ww_{n} \right)\,+\,\frac{1}{n^{3/2}}\Zz^\varepsilon_{k;n}
$$
where $(\Vv_{k;n}, \Zz^\varepsilon_{k;n})_{k=0}^\infty$ are independent identically distributed random variables (for fixed $\varepsilon$ and $n$)
and $T_0$, $\Yy_n$ and $\Ww_n$ are non-random (that is they are fixed for all $\omega\in \Omega$).
Analyticity means that for any $\omega\in\Omega$ the dependence of $T^\varepsilon_{k;n}(\omega)$ and thus of $\Zz^\varepsilon_{k;n}(\omega)$ on $\varepsilon\in nB_r=B_{nr}$ is analytic. 
The lowest order term shall be block-diagonalized in the form
$$
T_0\,=\,\pmat{\Gamma_0 \\ & U \\ & & \Gamma_2^{-1}} \qtx{with} 
\Gamma_0\in\CC^{r_0 \times r_0}\,, \;U\in {\rm U}(r_1)\,,\; \Gamma_2\in \CC^{r_2\times r_2}\;,
$$
where $\|\Gamma_0\|<1\;,\;\;\|\Gamma_2\|<1$. Here, ${\rm U}(r_1)$ is the unitary group of $r_1 \times r_1$ matrices.
Moreover, we assume that $\Vv_{k;n}$ have mean zero, $\EE(\Vv_{k;n})=0$, and that we have 
uniformly for $\varepsilon/n \in B_r$, $n\in\NN$ a 8th moment bound\footnote{in fact a $6+\delta$ moment for $\delta>0$ is enough, but here we prove this type of bound in Appendix A} in the following sense
$$
\EE(\|\Vv_{k;n} \|^8)\,<\,C \qtx{and} \EE(\|\Yy^\varepsilon_{k;n}\|^8)\,<\,C\;.
$$
Furthermore, we assume that the limits
$$
\lim_{n\to\infty} \Yy_{n}\,=\,\Yy\qtx{and} \lim_{n\to\infty} \Ww_n\,=\,\Ww
$$
exists and that we have limits of all second moments of the complex entries of $\Vv_{k;n}$ meaning that
$$
\lim_{n\to \infty} \EE\left(\Vv_{k;n}^\top M \Vv_{k;n}\right)\,=\, h(M) \qtx{and}
\lim_{n\to \infty} \EE\left(\Vv_{k;n}^* M \Vv_{k;n} \right)\,=\, \hat h(M)
$$
exist giving linear maps from $\CC^{r\times r}$ to itself.
Here, $\EE$ denotes the expectation, i.e. the integral over $\omega\in\Omega$ with respect to the probability measure $\PP$.
Without the limit $n\to\infty$ these functions encode all joint second moments of the random matrix entries of $\Vv_{k;n}$.
First, let us define some projections we will need.
$$
\Pp_{\leq 1}\,=\,\pmat{\II_{r_0+r_1} \\ \nul_{r_2\times(r_0+r_1)}} \;,\quad
\Pp_1\,=\,\pmat{\nul_{r_0\times r_1} \\  \II_{r_1} \\ \nul_{r_2 \times r_1}}\;,\quad
\Pp_2\,=\pmat{ \nul_{(r_0+r_1)\times r_2} \\ \II_{r_2}}\;.
$$
The exponential growing part for powers of $T_0$ will be projected away by a Schur complement:
Let $\Xx_0$ be such that $X_0:=(\Pp_{\leq 1}^\top \Xx_0^{-1}  \Pp_{\leq 1})^{-1}$ exists and consider
$$
X^\varepsilon_{k;n}\,:=\,\pmat{ \II_{r_0} \\ & U^{-k}}\left(\Pp_{\leq 1}^\top \,\left(
T^\varepsilon_{k;n} T^\varepsilon_{k-1;n} \cdots T^\varepsilon_{2;n} T^\varepsilon_{1;n} \Xx_0\right)^{-1} \Pp_{\leq 1} \right)^{-1}\;.
$$
The rotations through $U$ in $T_0$  lead to an averaging effect.
 The averaged covariances for a limiting Brownian motion will be described by the following functions,
$$
g(M):= \int_{\langle U \rangle} \bar \bfu \,\bar U\, \Pp_1^\top h(\Pp_1 \bfu^\top M \bfu \Pp_1^\top ) \Pp_1 U^* \bfu^*\;d\bfu
$$
$$
\hat g(M):= \int_{\langle U \rangle} \bfu \, U\, \Pp_1^\top \hat h(\Pp_1 \bfu^* M \bfu \Pp_1^\top ) \Pp_1 U^* \bfu^*\;d\bfu\;.
$$
Here, $\langle U \rangle$ is the smallest compact group containing the unitary $U$ and by the notation $d\bfu$ we integrate $\bfu\in \langle U \rangle$ 
over the normalized Haar measure on that group. Furthermore, for the drift term we define in a similar way
$$
W:=\int_{\langle U \rangle} \bfu \,\left[ \Pp_1^\top \Ww \Pp_1 \,-\ \Pp_1 h(\Pp_2 \Gamma_2 \Pp_2^\top) \Pp_1 \right]\,U^*\,\bfu^*\,d\bfu
$$
and
$$
Y:=\int_{\langle U \rangle} \bfu\,\Pp_1^\top \Yy \,\Pp_1\,U^*\,\bfu^*\,d\bfu\;.
$$
\begin{theo} \label{theo-B} {\rm (i)}
In the scaling limit $k \sim n \to \infty$ the family of processes can be described by an SDE (stochastic differential equation) in the sense that
the family of processes (for varying $\varepsilon$) converges in distribution 
$$
X^\varepsilon_{\lfloor nt \rfloor;n} \;\stackrel{}{\Longrightarrow}\; \pmat{\nul_{r_0 \times r_0} \\ & \Lambda^\varepsilon_t} \,X_0 \qtx{for}
n \to \infty\,,\; t>0\;.
$$
Here, $(\Lambda_t^\varepsilon)_{t>0}$ is a family of processes in $\CC^{r_1 \times r_1}$ satisfying an SDE in $t$ of the form
$$
{\rm d}\Lambda^\varepsilon_t\,=\,{\rm d}\Bb_t\,\Lambda^\varepsilon_t\;+\;(\varepsilon Y + W) \Lambda^\varepsilon_t \,{\rm d}t \qtx{with}
\Lambda^\varepsilon_0=\II_{r_1}\;.
$$
$\Bb_t$ is a matrix-valued Brownian motion (independent of $\varepsilon$) with covariance structure
$$
\EE(\Bb_t^\top M \Bb_t)\,=\,t\, g(M)\;,\quad 
\EE(\Bb_t^* M \Bb_t)\,=\,t\,\hat g(M)\;.
$$
{\rm (ii)} There is an analytic version of this family of processes, this means a version (same finite points distributions) such that the random functions
$\varepsilon\mapsto \Lambda^\varepsilon_t$ are analytic in $\varepsilon$.
Moreover, let $f\,:\,\CC^{(r_0+r_1) \times (r_0+r_1)} \to \CC$ be complex-analytic such that $\PP(f(\Lambda^\varepsilon_1)= 0\,\forall \varepsilon \in \CC)=0$. Then,  one has a well-defined point process
$$
{\rm zeros}_\varepsilon\left(f(\Lambda^\varepsilon_t) \right)=\{\varepsilon\in \CC\,:\,f(\Lambda^\varepsilon_1)=0 \}\,.
$$
{\rm (iii)} For some analytic function $f_0:\CC^{r\times r}\to \CC$ let be defined the point processes
$$
\Ee_n\,:=\,{\rm zeros}_\varepsilon\,f_0(T^\varepsilon_{n;n} T^\varepsilon_{n-1;n} \cdots T^\varepsilon_{1;n})
$$
which should be discrete countable sets with probability one.
Assume that one finds $\Xx_0$ as above and analytic functions $f_n:\CC^{(r_0+r_1)\times(r_0+r_1)}\to \CC$ such that
 for any compact set $K\subset \CC$ we have 
$$
\PP\left(\Ee_n \cap K \,=\, {\rm zeros}_\varepsilon f_n(X^\varepsilon_{n;n})\,\cap \,K\,\right) \to\,1\;,
$$
$f_n \to \hat f$ uniformly on $K$ and $f(\Lambda^\varepsilon):=\hat f\left( \pmat{\nul \\ & \Lambda^\varepsilon_1} X_0\right)$ fulfills the conditions of part b).
Then, in the sense of weak convergence of point processes,
$$
\Ee_n\,\Longrightarrow\, {\rm zeros}_\varepsilon \,f(\Lambda^\varepsilon_1)\;.
$$
\end{theo}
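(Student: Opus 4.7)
\textbf{Proof plan for Theorem~\ref{theo-B}.} My approach is to view the product $T^\varepsilon_{k;n}\cdots T^\varepsilon_{1;n}$ as a discrete-time perturbation of the deterministic contraction-plus-rotation $T_0$, with each step of order $n^{-1/2}$. The Schur complement in the definition of $X^\varepsilon_{k;n}$ is the standard device for removing the exponentially expanding block $\Gamma_2^{-1}$: after inverting the full product, the lower-right entries decay geometrically under iteration, so projecting by $\Pp_{\leq 1}$ and inverting once more yields an $O(1)$ object. First I would derive a one-step recursion for $X^\varepsilon_{k;n}$, Taylor expand each factor $T^\varepsilon_{k;n}=T_0+n^{-1/2}\Vv_{k;n}+n^{-1}(\varepsilon\Yy_n+\Ww_n)+\cdots$, and track all contributions up to order $1/n$. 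Exponential decay from $\|\Gamma_0\|<1$ allows one to argue that the upper $r_0$-block of $X^\varepsilon_{k;n}$ becomes negligible after logarithmically many steps, isolating an effective recursion on the middle $r_1\times r_1$ block.

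Second, I would identify the generator of the limiting SDE by computing the conditional mean and covariance of the one-step increment of the middle block, conjugated by $U^{-k}$. The point is that quadratic expressions in $\Vv_{k;n}$ acquire a $U^k$-rotation, and by Weyl equidistribution of $(U^k)_{k\in\NN}$ in the closed subgroup $\langle U\rangle$, time-averaging converts these quadratics into integrals against the normalized Haar measure on $\langle U\rangle$. This is precisely what produces the functionals $g,\hat g,W,Y$ in the statement. The $-h(\Pp_2\Gamma_2\Pp_2^\top)$ correction inside $W$ is the It\^o-type correction arising because the Schur complement resolves a coupling between the neutral and hyperbolic blocks at order $1/n$. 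With the generator identified, weak convergence in (i) follows from the standard diffusion-approximation scheme: tightness from the $L^8$ moment bound on $\Vv_{k;n}$, and uniqueness from the fact that the limiting SDE is linear in $\Lambda^\varepsilon_t$, hence pathwise uniquely solvable.

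For (ii), analyticity in $\varepsilon$ is preserved at every finite $n$ because the recursion depends on $\varepsilon$ only through the affine term $\varepsilon\Yy_n+\Ww_n$ and the analytic perturbation $\Zz^\varepsilon_{k;n}$, and the inverses involved are nonsingular generically. One then manufactures a jointly analytic version $\varepsilon\mapsto\Lambda^\varepsilon_t$ by a Kolmogorov-continuity argument on polydiscs, combined with the Cauchy integral formula applied to finite-dimensional marginals already shown to converge. Well-definedness of ${\rm zeros}_\varepsilon f(\Lambda^\varepsilon_1)$ then follows from almost sure discreteness of zero sets of not-identically-zero analytic functions. For (iii), one combines part (i), the hypothesis $f_n\to\hat f$ uniformly on compacta, and Hurwitz's theorem (convergence of zero loci of uniformly convergent analytic functions) to conclude the weak convergence of the point processes $\Ee_n$.

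The main obstacle is the careful bookkeeping in the second paragraph: one must simultaneously (a) control the Schur complement inverse along the random trajectory with high probability, (b) correctly extract the It\^o correction generated by integrating out the hyperbolic block, and (c) show that the cross-terms between the three block sectors, after $U^{-k}$ conjugation, vanish in the Haar average. These three features are intertwined, and together they are what upgrade a naive diffusive scaling into the specific covariances $g,\hat g$ on the unitary sector. Tightness uniformly in $\varepsilon$ over a compact disc is also nontrivial, since the Schur complement could in principle fail along rare trajectories; here the eighth moment hypothesis and the geometric decay from $\|\Gamma_0\|,\|\Gamma_2\|<1$ are what allow a union-bound argument to succeed.
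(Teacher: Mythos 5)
Your proposal is a from-scratch reconstruction of arguments the paper does not re-prove: in the paper, Theorem~\ref{theo-B} is a repackaging of \cite[Theorem~1.1]{SV} for part (i), \cite[Section~5.2]{SV} together with \cite[Corollary~15]{VV} for part (ii), and \cite[Theorem~5.4]{SV} with \cite[Corollary~15]{VV} again for part (iii), plus the observation that joint convergence over finitely many $\varepsilon$ follows by running the single-$\varepsilon$ argument on the block-diagonal matrices $\diag(T^{\varepsilon_1}_{k;n},\ldots,T^{\varepsilon_m}_{k;n})$. Your outline of (i) --- one-step Taylor expansion of the Schur-complement recursion, geometric decay of the $\Gamma_0$-block, conjugation by $U^{-k}$ followed by Weyl equidistribution producing the Haar averages $g,\hat g,W,Y$, and the It\^o-type correction $-h(\Pp_2\Gamma_2\Pp_2^\top)$ from integrating out the coupling to the expanding block --- matches the strategy of the cited \cite{SV} result and is the right proof; uniqueness from linearity of the limiting SDE and tightness from the eighth-moment bound are likewise the correct ingredients.

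The one place your sketch is materially looser than the argument the paper relies on is in the passage from (i) to (iii). Part (i), as you set it up, gives convergence in distribution of $X^\varepsilon_{\lfloor nt\rfloor;n}$ for finitely many $\varepsilon$ at a time, i.e.\ convergence of finite-dimensional marginals of the random field $(\varepsilon,t)\mapsto X^\varepsilon_{\lfloor nt\rfloor;n}$. Hurwitz's theorem, however, requires almost sure locally uniform convergence of the random analytic maps $\varepsilon\mapsto f_n(X^\varepsilon_{n;n})$ to $\varepsilon\mapsto f(\Lambda^\varepsilon_1)$, and distributional convergence of finite-dimensional marginals does not by itself furnish a coupling with that property. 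The paper bridges this precisely through \cite[Corollary~15]{VV}, which builds analytic versions of $\varepsilon\mapsto X^\varepsilon_{n;n}$ for every $n$, realizes them all on a common probability space, and upgrades the convergence to almost sure uniform convergence on compacts in $\varepsilon$; only then can Hurwitz be applied pathwise. You gesture at this with the Kolmogorov-continuity and Cauchy-integral remark under (ii) and with the tightness-in-$\varepsilon$ paragraph at the end, but you never close the loop by actually producing the coupling before invoking Hurwitz in (iii). The fix is standard --- the uniform moment bounds give tightness in the space of continuous functions on a compact $K\subset\CC$ with values in $\CC^{r_1\times r_1}$, a Skorokhod-type representation gives a.s.\ convergence, and Cauchy estimates propagate analyticity to the limit --- but it is the load-bearing step that turns a diffusion limit into weak convergence of the zero-point processes, and it should be made explicit.
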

\begin{proof}
Part (i) follows directly from \cite[Theorem~1.1]{SV}. Also, note that for a finite set of $\varepsilon$, say $(\varepsilon_1,\ldots,\varepsilon_m)\in \CC^m$ we can simply consider 
block-diagonal matrices $\diag(T^{\varepsilon_1}_{k;n},\ldots,T^{\varepsilon_m}_{k;n})$ for obtaining the joint distributions for different $\varepsilon$
in the limit. This leads to the use of the same Brownian motions for different $\varepsilon$ and we have  in fact convergence
to a random field  $(\varepsilon,t)\to \Lambda^\varepsilon_t$.\\
For part (ii) first note that the limit is independent of $\Zz^\varepsilon_{k;n}$. Hence, we can set this part equal to $\nul$ first, obtaining families of random matrices $T^\varepsilon_{k;n}$ for all $\varepsilon \in \CC$ depending analytically on $\varepsilon$. 
As argued in \cite[Section~5.2]{SV} using the uniform bounds (in $\varepsilon$) one can use \cite[Corollary~15]{VV} to get a unique version for which $\varepsilon \mapsto \Lambda^\varepsilon_{t}$ is analytic (uniqueness in the sense of joint probability distributions on the set of analytic functions). For $f$ as given, one can then obtain well-defined distributions on the set of countable subsets of $\CC$ defined by the zeros of $f(\Lambda^\varepsilon_1)$ which gives a point process.\\
Part (iii) is basically proved in \cite[Theorem~5.4]{SV} for a specific case, following again \cite[Corollary~15]{VV}.  
First, for the weak convergence of point processes it is sufficient that for any compact set $K\subset \CC$ the point processes restricted to $K$ converge. 
Secondly, for $K\subset \CC$ compact and $n_0$ large enough we have $K\subset n_0 B_r$ and all 
$T^\varepsilon_{k;n}$ are defined for $\varepsilon\in K$ and $n\geq n_0$.
Moreover, using the uniform bounds and arguments in \cite{SV}, for $\omega\in \Omega_0\subset \Omega$ with $\PP(\Omega_0)=1$ we have that the Schur complements $X^\varepsilon_{k;n}$ are well defined for sufficiently large $n$ (with a possibly random lower bound).
Again, by \cite[Corollary~15]{VV} one finds analytic versions in $\varepsilon$, all realized on the same probability space, such that the convergence in part a) is uniform on compact sets (almost surely).
Thus, the zeros of $f_n(X^\varepsilon_{n;n}(\omega))$ converge to the ones of $f(\Lambda_1^\varepsilon(\omega))$ uniformly in $K$ (almost surely), 
if this limiting function is not identically zero in $\varepsilon$.
This implies the weak convergence of the point processes given by the zeros in $K$. 
\end{proof}


\begin{thebibliography}{AAAA}




\bibitem[ASW]{ASW} M. Aizenman, R. Sims and S. Warzel, {\sl Stability of the absolutely continuous spectrum of random Schr\"odinger operators on tree graphs}, Prob. Theor. Rel. Fields (2006) {\bf 136}:363--394


\bibitem[AM]{AM}  M. Aizenman and S. Molchanov, {\sl Localization at large disorder and extreme
energies:  an elementary derivation},  Commun. Math. Phys. (1993) {\bf 157}:245--278


\bibitem[AW]{AW} M. Aizenman and S. Warzel,
{\sl Resonant delocalization for random Schrödinger operators on tree graphs},
J. Eur. Math. Soc. (2013) {\bf 15}(4):1167--1222

\bibitem[AEK]{AEK} O. Ajanki, L. Erd\H{o}s and T. Kr\"uger, {\sl Universality for general Wigner-type matrices},
 Prob. Th.  Rel. Fields (2016) published online DOI: 10.1007/s00440-016-0740-2


\bibitem[And]{And} P. W. Anderson,
{\sl Absence of diffusion in certain random lattices},
Phys. Rev. (1958) {\bf 109}:1492--1505


\bibitem[AGZ]{AGZ} G. W. Anderson, A. Guionnet and O. Zeitouni, {\sl An Introduction to Random Matrices}, (Cambridge Studies in Advanced Mathematics) Cambridge University Press, Cambridge, 2009

\bibitem[DLS]{DLS}  F. Delyon, Y. Levy and B. Souillard,  {\sl Anderson
localization for multidimensional systems at large disorder or low
energy},  Commun. Math. Phys. (1985) {\bf 100}:463-470 (1985)



\bibitem[DG]{DG} P. Deift, D. Gioev, {\sl Random Matrix Theory: Invariant Ensembles and Universality}, Courant Lect. Notes Math., vol. 18, American Mathematical Society, Providence, RI, 2009.

\bibitem[EKS]{EKS} L.~Erd\H{o}s, T.~Kr\"uger and D Schr\"oder,
{\sl Random Matrices with Slow Correlation Decay}, preprint (2017) arXiv:1705.10661


\bibitem[ESY]{ESY} L.~Erd\"os, B.~Schlein and H.-T. Yau,
{\sl Universality of random matrices and local relaxation flow.}, Inventiones Math. (2011) {\bf 185}:75-119 (2011)

\bibitem[ESYY]{ESYY} L.~Erd\"os, B.~Schlein, H.-T. Yau and J. Yin,
{\sl The local relaxation flow approach to universality of the local statistics for random matrices},
An. Inst. Henri Poincare Prob. Stat. (2012) {\bf 48}:1--46  

\bibitem[For]{For} P. J. Forrester, {\sl Log-gases and Random Matrices} (London Mathematical Society Monographs), Princeton University Press, 

\bibitem[FHH]{FHH} R. Froese, F. Halasan and D. Hasler, 
{\sl Absolutely continuous spectrum for the Anderson model on a product
of a tree with a finite graph}, J. Funct. Analysis (2012) {\bf 262}:1011--1042 



\bibitem[FHS]{FHS} R. Froese, D. Hasler and W. Spitzer, 
{\sl Absolutely continuous spectrum for the Anderson Model on a tree:
A geometric proof of Klein's Theorem}, Commun. Math. Phys. (2007) {\bf 269}:239--257 


\bibitem[FS]{FS}  J. Fr\"ohlich and T. Spencer,  {\sl Absence of diffusion in the
Anderson tight binding model for large disorder or low energy},
Commun. Math. Phys. (1983) {\bf 88}:151--184 


\bibitem[GK]{GK} F. Germinet and F. Klopp,
 {\sl Spectral statistics for random Schr\"odinger operators in the localized regime},
 J. Eur. Math. Soc. (2014) {\bf 16}:1967--2031


\bibitem[GMP]{GMP}  Ya. Gol'dsheid, S. Molchanov and L. Pastur,  
{\sl Pure point spectrum of stochastic one dimensional Schr\"odinger operators},
Funct. Anal. Appl. (1977) {\bf 11}:1--10 


\bibitem[KLW]{KLW} M. Keller, D. Lenz and S. Warzel, 
{\sl Absolutely continuous spectrum for random operators on trees of finite cone type}, J. D' Analyse Math. (2012) {\bf 118}:363--396 


\bibitem[KLWo]{KLWo} M. Keller, D. Lenz and R. K. Wojciechowski,
{\sl Volume Growth, Spectrum and Stochastic Completeness of Infinite Graphs}
Math. Z. (2013) {\bf 274}:905--932 


  
\bibitem[Kle]{Kl} A. Klein, {\sl Absolutely continuous spectrum in the Anderson model on the Bethe lattice},
Math. Res. Lett. (1994) {\bf 1}:399--407 


 \bibitem[KLS]{KLS}  A. Klein, J. Lacroix and A. Speis,  {\sl Localization for the Anderson model on a strip with singular potentials},  J. Funct. Anal. (1990) {\bf  94}:135--155 


\bibitem[KS]{KS} A. Klein and C. Sadel, {\sl Absolutely Continuous Spectrum for Random Schr\"odinger Operators
on the Bethe Strip}, Math. Nachr. (2012) {\bf 285}:5--26 


\bibitem[Klo]{Klo} F. Klopp,  {\sl Weak disorder localization and Lifshitz tails},  
Commun. Math. Phys. (2002) \textbf{232}:125--155 

\bibitem[KuS]{KuS}  H. Kunz and B. Souillard,  {\sl Sur le spectre des operateurs
aux differences finies aleatoires},  Commun. Math. Phys. (1980) {\bf 78}:
201--246

\bibitem[Meh]{Me} M. Mehta, {\sl Random matrices}, Elsevier/Academic Press, Amsterdam, 2004

\bibitem[Min]{Mi} N. Minami, {\sl Local fluctuation of the spectrum of a multidimensional Anderson
tight binding model}, Commun. Math. Phys. (1996) {\bf 177}:709--725 



\bibitem[Sa1]{Sa-TS} C. Sadel,
{\sl Absolutely continuous spectrum for random Schr\"odinger operators on tree-strips of finite cone type},
Annales Henri Poincar\'e (2013) {\bf 14}:737--773 

\bibitem[Sa2]{Sa-Fib} C. Sadel,
{\sl Absolutely continuous spectrum for random Schr\"odinger operators on the Fibbonacci and similar tree-strips},
Math. Phys. Anal. Geom. (2014) {\bf 17}:409--440 

\bibitem[Sa3]{Sa-AT} C. Sadel, {\sl Anderson transition at two-dimensional growth rate on antitrees and spectral theory for Operators with one propagating channel}, Annales Henri Poincre (2016) {\bf 17}:1631--1675

\bibitem[Sa4]{Sa-OC} C. Sadel, {\sl Spectral Theory of one-channel operators and application to absolutely continuous spectrum for Anderson type models}, J. Funct. Anal. (2018) {\bf 274}(8):2205-2244


\bibitem[SV]{SV} C. Sadel and B. Vir\'ag, {\sl A central limit theorem for products of random matrices and GOE statistics for the Anderson model on long boxes}, 
Commun. Math. Phys. (2016) {\bf 343}:881--919

\bibitem[TV]{TV} T. Tao and V. Vu,
{\sl Random matrices: Universality of local eigenvalue statistics}, Annals of Prob. (2012) {\bf 40}:1285--1315

\bibitem[VV]{VV} B. Valko and B. Vir\'ag, {\sl Random Schr\"odinger Operators on long boxes, noise explosion and the GOE},
Trans. Amer. Math. Soc. (2014) {\bf 366}:3709--3728

\bibitem[Wan]{Wa} W.-M Wang, {\sl Localization and universality of Poisson statistics
 for the multidimensional Anderson model at weak disorder}, 
 Invent. Math. (2001) \textbf{146}:365--398 
  
\bibitem[Wig]{Wi} E. P. Wigner,
Gatlinberg Conference on Neutron Physics, Oak Ridge National Laboratory Report, ORNL 2309:59  
  


\end{thebibliography}
\end{document}